\documentclass[12pt]{article}
\textheight22cm \textwidth16cm \hoffset-2cm \voffset-1.3cm
\parskip 2pt plus1pt minus1pt
\usepackage{amsmath,amsthm,amssymb}
\usepackage{array, longtable}
\newtheorem{Theorem}{Theorem}[section]
\newtheorem{lem}[Theorem]{Lemma}
\newtheorem{Remark}[Theorem]{Remark}
\newtheorem{Definition}[Theorem]{Definition}
\newtheorem{Corollary}[Theorem]{Corollary}
\newtheorem{Proposition}[Theorem]{Proposition}
\newtheorem{Example}[Theorem]{Example}

\numberwithin{equation}{section}
\numberwithin{table}{section}

\begin{document}

\title{Galois Hulls of Linear Codes over Finite Fields\footnote{
 E-Mail addresses: hwliu@mail.ccnu.edu.cn (H. Liu), panxu@mails.ccnu.edu.cn (X. Pan)}}

\author{Hongwei Liu,~Xu Pan}
\date{\small
School of Mathematics and Statistics, Central China Normal University\\Wuhan, Hubei, 430079, China\\
}
\maketitle

\begin{abstract}
The $\ell$-Galois hull  $h_{\ell}(C)$ of an $[n,k]$ linear code $C$ over a finite field $\mathbb{F}_q$ is the intersection of  $C$ and  $C^{{\bot}_{\ell}}$, where  $C^{\bot_{\ell}}$ denotes the $\ell$-Galois dual of $C$ which introduced by Fan and Zhang (2017).  The $\ell$- Galois LCD code is a linear code $C$ with $h_{\ell}(C) = 0$. In this paper, we show that the dimension of the $\ell$-Galois hull of a linear code is invariant under permutation equivalence and we provide a method to calculate the dimension of the $\ell$-Galois hull  by the generator matrix of the code. Moreover, we obtain that the dimension of the $\ell$-Galois hulls of ternary codes are also invariant under monomial equivalence.
We show that every $[n,k]$ linear code over $\mathbb F_{q}$ is monomial equivalent to an $\ell$-Galois LCD code for any $q>4$. We conclude that if there exists an $[n,k]$ linear code over $\mathbb F_{q}$ for any $q>4$, then there exists an $\ell$-Galois LCD code with the same parameters for any $0\le \ell\le  e-1$, where $q=p^e$ for some prime $p$. As an application, we characterize  the $\ell$-Galois hull of matrix product codes over finite fields.

\medskip
\textbf{Keywords:} $\ell$-Galois hull of a linear code, monomial equivalence, $\ell$-Galois LCD code, matrix product code.

\medskip
\textbf{2010 Mathematics Subject Classification:}~94B05,  11T71.
\end{abstract}
\section{Introduction}
Let $\mathbb{F}_q$ be a finite field of order $q$, where $q=p^e$ and $p$ is a prime.  Recently, Fan and Zhang \cite{FZ} generalize the Euclidean inner product and the Hermitian inner product to the so-called {\it $\ell$-Galois form} (or {\it $\ell$-Galois inner product}), where $0\le \ell\le e-1$. The $\ell$-Galois dual codes, and the $\ell$-Galois self-dual constacyclic codes over finite fields are studied. In particular, necessary and sufficient conditions for the existence of $\ell$-Galois self-dual  and  isometrically Galois self-dual constacyclic codes are obtained. As consequences, some results on self-dual, iso-dual and Hermitian self-dual constacyclic codes are derived.

Linear complementary-dual (LCD for short) codes are linear codes that intersect with their duals
trivially. They were first  studied by Massey \cite{M} who showed that these codes are optimal for the two-user binary adder channel (BAC) and
that they are asymptotically good. Sendrier \cite{S3} showed that these codes meet the Gilbert-Varshamov bound. In (\cite{S2},\cite{S3},\cite{S4},\cite{SSS}, \cite{SS}), the authors also studied the hulls of linear codes, and tried to find permutations between two equivalent codes, which has an application to code-based public key cryptosystems. Carlet and Guilley gave some applications of LCD codes in side-channel attacks and fault non-invasive attacks (\cite{BC},\cite{CG1},\cite{CG2}). LCD codes also can  be used for constructions of lattices \cite{J}. Optimal and MDS codes that are LCD are studied in many papers (see~\cite{BJ},\cite{CM},\cite{CL},\cite{FZ},\cite{HO},\cite{L},\cite{LL},\cite{M},\cite{S1},\cite{KS}).

Motivated by the previous work, we study the Galois hulls of linear codes over finite fields. The {\it $\ell$-Galois hull} of a linear code $C$ over a finite field $\mathbb{F}_q$ is defined by $h_{\ell}(C) = C\bigcap C^{\bot_{\ell}}$, where $q=p^e$, $p$ is a prime, and $0\le \ell\le e-1$. The classical LCD code is a linear code with $h_{0}(C) = 0$, and the Hermitian LCD code is a code with  $h_{e\over 2}(C)=0$, where $e$ is even.

Construction of codes is an interesting research field in coding theory. The matrix product code $[C_{1}, \cdots, C_{M}]\cdot A$ is a new code constructed from the codes $C_1,\cdots, C_M$ of  same length $n$  and an $M\times N$ matrix $A$ over a finite field $\mathbb{F}_q$. These codes were first proposed and studied in \cite{BN}. There are many papers focusing on its algebraic structure, different distance structures, and decoding algorithm  (see~\cite{HR},\cite{HL},\cite{MS},\cite{A},\cite{FL}).

This paper is organized as follows. Section 2 gives some preliminaries. In Section~3, a characterization  of  the dimension of the $\ell$-Galois hull of a linear code is provided. As a corollary, we obtain a necessary and sufficient condition for a linear code to be an $\ell$-Galois LCD code. In Section~4, we first show that the dimension of any  $\ell$-Galois hull of a linear code is invariant under permutation equivalence for $0\le \ell \le e-1$.  For  ternary codes, the dimension of the $\ell$-Galois hull is also invariant under monomial equivalence. Then we show that every  linear code over $\mathbb F_{q}$ is monomial equivalent to an $\ell$-Galois LCD code in the case of $q>4$. We conclude that if there exists an $[n, k]$ linear code  over $\mathbb F_{q}$ with $q>4$, then there exists an $\ell$-Galois LCD code with the same parameters.  In Section~5,  we  study the structure and the dimension of $\ell$-Galois hull of matrix product codes.

\section{Preliminaries}
Throughout this paper, $\mathbb F_{q}$ denotes a finite field of order $q=p^{e}$, where $p$ is a prime, $e$ is a positive integer. By $\mathbb{F}_{q}^{*}$ we denote the multiplicative group of $\mathbb{F}_{q}$. Let  $\mathbb{F}_{q}^{n} = \{ (x_{1},\cdots,x_{n})\ |\ x_{j} \in \mathbb{F}_{q}, 1\le j\le n\}$ be the $n$ dimensional vector space over $\mathbb{F}_{q}$. Any subspace $C$ of $\mathbb{F}_{q}^{n}$ is called a {\it linear code} of length $n$ over $\mathbb{F}_{q}$. We assume that all codes  are linear in this paper.

Let $S_{n}$ be the {\it symmetric group} on the set $X=\{1,2,\cdots ,n\}$. For all $\varphi \in S_{n}$ and  $\mathbf{x}=(x_{1},\cdots,x_{n}) \in \mathbb{F}_{q}^{n}$, $S_{n}$  acts on $\mathbb{F}_{q}^{n}$ in the following way.
$$
S_n\times \mathbb{F}_{q}^{n}\to \mathbb{F}_{q}^{n}, (\varphi, \mathbf{x})\mapsto \varphi\cdot\mathbf{x}=\varphi(\mathbf{x})=(x_{\varphi(1)},\cdots,x_{\varphi(n)}).
$$

In \cite{FZ}, Fan and Zhang introduced the following concept.

\begin{Definition}
Assume the notations given above. For each integer $\ell$ with $0\leq \ell\leq e-1$, let
$$\langle\mathbf{x},\mathbf{y}\rangle_{\ell}=x_{1}y_{1}^{p^{\ell}}+ \cdots +x_{n}y_{n}^{p^{\ell}} , \quad \forall\,  \mathbf{x},\mathbf{y}\in \mathbb{F}_{q}^{n}. $$
Then the form $\langle-,-\rangle_{\ell}$ is called the  $\ell$-Galois form on $\mathbb{F}_{q}^{n}$, or $\ell$-Galois inner product.
\end{Definition}

It is easy to see that $\langle-,-\rangle_{0}$ is just the usual Euclidean inner product. And,  $\langle-,-\rangle_{e\over 2}$ is the Hermitian inner product if $e$ is even. For any code $C$ over $\mathbb{F}_{q}$, the following code
$$
C^{\perp _{\ell} }=\{ \mathbf{x} \in \mathbb{F}_{q}^{n}\, |\, \langle\mathbf{c},\mathbf{x}\rangle_{\ell}=0,\,\, \forall \mathbf{c} \in C \}
$$
is called the {\it $\ell$-Galois dual code} of $C$. If $C\subseteq {C^{\perp _{\ell} }}$, then $C$ is said to be {\it $\ell$-Galois self-orthogonal}. Moreover, $C$ is said to be {\it $\ell$-Galois self-dual} if $C= {C^{\perp _{\ell} }}$.

Note that $C^{\perp _{\ell} }$ is linear whenever $C$ is linear or not. In particular,  $C^{\perp _{0} }$ ($C^{\perp }$ for short) is just the Euclidean dual code of $C$, and $C^{\perp _{ \frac{e}{2}} }$($C^{\perp _{H} }$ for short) is just the Hermitian dual code of $C$ if $e$ is even.

Let $\sigma:\mathbb{F}_{q}\rightarrow\mathbb{F}_{q}, a\mapsto a^{p}$, be the Frobenius automorphism of $\mathbb{F}_{q}$. For any $\mathbf{x}=(x_{1},\cdots,x_{n}) \in \mathbb{F}_{q}^{n}$, and any matrix $G=(a_{ij})_{k\times l}$ over $\mathbb{F}_{q}$ , set $\sigma(\mathbf{x}):=(\sigma(x_{1}),\cdots,\sigma(x_{n}))$ and $\sigma(G):=(\sigma(a_{ij}))_{k\times l }$.

The following proposition is easily obtained.

\begin{Proposition}
Assume the notations given above. Then for any $0\le \ell\le e-1$,

(1) $C^{\perp _{\ell} }=(\sigma^{e-\ell}(C))^{\perp_{0}}=\sigma^{e-\ell}(C^{\perp_{0}}).$

(2) $(C^{\perp _{\ell} })^{\perp _{f}}=\sigma^{2e-\ell-f}(C)$, for any $0\le \ell,f\le e-1$.  In particular, $(C^{\perp _{0} })^{\perp _{0}}=C$, and $(C^{\perp _{\frac{e}{2}} })^{\perp _{\frac{e}{2}}}=C$ if $e$ is even.
\end{Proposition}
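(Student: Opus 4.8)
The plan is to push the Frobenius power off the second argument of the $\ell$-Galois form and onto the first, reducing everything to the Euclidean case $\ell=0$. Two elementary facts will be used throughout. First, the Frobenius automorphism $\sigma\colon a\mapsto a^p$ has order $e$ on $\mathbb{F}_q$, so $\sigma^{e-\ell}$ is the inverse of $\sigma^\ell$ and $x^{p^\ell\cdot p^{e-\ell}}=x^{p^e}=x$ for all $x\in\mathbb{F}_q$. Second, for any automorphism $\tau$ of $\mathbb{F}_q$ and any linear code $D\subseteq\mathbb{F}_q^n$ one has $\tau(D^{\perp_0})=(\tau(D))^{\perp_0}$; this is immediate, since $\mathbf{x}\in\tau(D^{\perp_0})$ means $\tau^{-1}(\mathbf{x})$ satisfies $\sum_i d_i\,\tau^{-1}(x_i)=0$ for every $\mathbf{d}\in D$, and applying $\tau$ to each such scalar identity gives $\sum_i \tau(d_i)\,x_i=0$, that is, $\mathbf{x}$ is Euclidean-orthogonal to every vector of $\tau(D)$, and this equivalence reverses because $\tau$ is bijective and additive.

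For part (1), I would start from the definition: $\mathbf{x}\in C^{\perp_\ell}$ iff $\sum_i c_i x_i^{p^\ell}=0$ for all $\mathbf{c}\in C$. Because $\sigma^{e-\ell}$ is a bijection of $\mathbb{F}_q$, this is equivalent to $\sigma^{e-\ell}\!\left(\sum_i c_i x_i^{p^\ell}\right)=0$, which expands to $\sum_i c_i^{p^{e-\ell}}x_i=0$, i.e., $\langle\sigma^{e-\ell}(\mathbf{c}),\mathbf{x}\rangle_0=0$. Hence $\mathbf{x}\in C^{\perp_\ell}$ iff $\mathbf{x}$ is Euclidean-orthogonal to every element of $\sigma^{e-\ell}(C)$, which is exactly $C^{\perp_\ell}=(\sigma^{e-\ell}(C))^{\perp_0}$. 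Applying the second fact above with $\tau=\sigma^{e-\ell}$ and $D=C$ then gives $(\sigma^{e-\ell}(C))^{\perp_0}=\sigma^{e-\ell}(C^{\perp_0})$, which finishes (1).

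For part (2), I would simply iterate (1). Applying the identity $D^{\perp_f}=\sigma^{e-f}(D^{\perp_0})$ with $D=C^{\perp_\ell}$ gives $(C^{\perp_\ell})^{\perp_f}=\sigma^{e-f}\bigl((C^{\perp_\ell})^{\perp_0}\bigr)$. On the other hand, writing $C^{\perp_\ell}=\sigma^{e-\ell}(C^{\perp_0})$ from (1) and invoking the second fact together with the classical Euclidean biduality $(C^{\perp_0})^{\perp_0}=C$, we obtain $(C^{\perp_\ell})^{\perp_0}=\sigma^{e-\ell}\bigl((C^{\perp_0})^{\perp_0}\bigr)=\sigma^{e-\ell}(C)$. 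Combining, $(C^{\perp_\ell})^{\perp_f}=\sigma^{e-f}(\sigma^{e-\ell}(C))=\sigma^{2e-\ell-f}(C)$. The ``in particular'' assertions drop out by setting $\ell=f=0$, where $\sigma^{2e}=\mathrm{id}$, and $\ell=f=e/2$, where $\sigma^{2e-e}=\sigma^e=\mathrm{id}$.

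The argument is essentially bookkeeping, so I do not expect a genuine obstacle. The only place to be careful is the direction of the exponent when moving the Frobenius power across the form: one must use $\sigma^{e-\ell}$, the inverse of $\sigma^\ell$, rather than $\sigma^\ell$ itself. I also rely on two standard facts that could be stated as a preliminary lemma if desired --- namely $(C^{\perp_0})^{\perp_0}=C$ (dimension count plus the obvious inclusion $C\subseteq(C^{\perp_0})^{\perp_0}$) and the commutation of any field automorphism with the Euclidean dual operation --- and on $x^{p^e}=x$ for $x\in\mathbb{F}_q$.
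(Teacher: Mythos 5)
Your proof is correct and takes essentially the same route as the paper, which simply cites the identity $\langle\mathbf{c},\mathbf{x}\rangle_{\ell}=\sigma^{\ell}(\langle\sigma^{e-\ell}(\mathbf{c}),\mathbf{x}\rangle_{0})$ — the very computation you carry out when you apply $\sigma^{e-\ell}$ to $\sum_i c_i x_i^{p^{\ell}}$. Your write-up just makes explicit the two auxiliary facts (commutation of field automorphisms with the Euclidean dual, and Euclidean biduality) that the paper leaves implicit.
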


\begin{proof}
The two statements  follow immediately from the identity $\langle\mathbf{c},\mathbf{x}\rangle_{\ell}=\langle\mathbf{c},\sigma^{\ell}(\mathbf{x})\rangle_{0}$$=\sigma^{\ell}(\langle\sigma^{e-\ell}(\mathbf{c}),\mathbf{x}\rangle_{0}).$
\end{proof}


\begin{Definition}
Let $C$ be a linear code over $\mathbb{F}_{q}$. The $\ell$-Galois hull of $C$ is defined by $h_{\ell}(C) = C\bigcap C^{\bot_{\ell}}$. If $h_{\ell}(C) = 0$, then $C$ is called a linear code with $\ell$-Galois complementary dual or an $\ell$-Galois LCD code. If $h_{\ell}(C) = C$, then $C$ is called an $\ell$-Galois self-orthogonal linear code.
\end{Definition}
\begin{Remark}
Note that when $\ell=0$  and $C\bigcap C^{\bot_0} = 0$,  the code $C$ is the classical LCD code. When  $e$ is even and $C\bigcap C^{\bot_{e\over 2}} = 0$,  the code $C$ is the Hermitian LCD code.
 \end{Remark}
A {\it monomial} matrix is a square matrix such that in every row (and in every column) there is exactly one nonzero element. It is easy to see that any monomial matrix is a product of a permutation matrix and an invertible diagonal matrix.  In particular, a permutation matrix is a special monomial matrix. Two linear codes $C_{1}$ and $C_{2}$ of length $n$ over $\mathbb{F}_q$ are {\it monomial equivalent}, if there is an monomial matrix $M$ of size $n$ such that $C_{2}=C_{1}M=\{\mathbf{y}\mid \mathbf{y}=\mathbf{x}M, \,\,for \, \mathbf{x}\in C_{1} \}$. If $M$ is a permutation matrix, then $C_{1}$ and $C_{2}$ are called {\it permutation equivalent}.

\section{The $\ell$-Galois hull of linear codes }
In this section, we give a characterization for the $\ell$-Galois hull of any linear code over $\mathbb{F}_q$. We have the following theorem.
\begin{Theorem}\label{l-Galois}
Let $C$ be an $[n,k]$ linear code over $\mathbb{F}_{q}$ with a generator matrix $G$.
Let $h$ be the dimension of the $\ell$-Galois hull $h_{\ell}(C)= C\bigcap C^{\bot_{\ell}}$ of $C$, and let $r = k-h$. Then there exists a generator
matrix $G_{0}$ of $C$ such that
\[
G_{0}\sigma^{\ell}(G_{0}^{T})=\left(\begin{array}{cc}
                        O_{h\times h} & H_{h\times r} \\
                        O_{r\times h} & P_{r\times r}
\end{array}\right),
\]
where $O_{h \times h}$ and $O_{r \times h}$ are respectively zero matrices of sizes $h\times h$ and $r\times h$,  and the rank $r(Q)$ of $Q=
\left(\begin{array}{c}
                         H_{h\times r} \\
                         P_{r\times r}
\end{array}\right)$ is $r$. Furthermore, the rank $r(G\sigma^{\ell} (G^{T}))$ of $G\sigma^{\ell} (G^{T})$ is $r$ for any generator matrix $G$ of $C$.

 \end{Theorem}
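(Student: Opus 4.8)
The key observation is that the $\ell$-Galois hull $h_\ell(C)$ can be read off from the generator matrix via the matrix $G\sigma^\ell(G^T)$. Indeed, if $G$ has rows $\mathbf{g}_1,\dots,\mathbf{g}_k$, then a codeword $\mathbf{x}G \in C$ (with coordinate vector $\mathbf{x}\in\mathbb{F}_q^k$) lies in $C^{\perp_\ell}$ if and only if $\langle \mathbf{x}G, \mathbf{g}_i\rangle_\ell = 0$ for all $i$, and one computes $\langle \mathbf{x}G, \mathbf{g}_i\rangle_\ell = \mathbf{x}\,G\sigma^\ell(G^T)\mathbf{e}_i^T$. Hence $\mathbf{x}G \in h_\ell(C)$ precisely when $\mathbf{x}$ lies in the left kernel of $G\sigma^\ell(G^T)$; since $\mathbf{x}\mapsto\mathbf{x}G$ is injective, we get $\dim h_\ell(C) = k - r\bigl(G\sigma^\ell(G^T)\bigr)$. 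This already establishes the \emph{last} sentence of the theorem (the rank of $G\sigma^\ell(G^T)$ equals $r$, independent of the chosen generator matrix), and it also tells us that the $h$-dimensional left kernel of $G\sigma^\ell(G^T)$ corresponds exactly to a basis of the hull.

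With this in hand I would construct $G_0$ as follows. Pick a basis $\mathbf{b}_1,\dots,\mathbf{b}_h$ of $h_\ell(C)$ and extend it to a basis $\mathbf{b}_1,\dots,\mathbf{b}_h,\mathbf{b}_{h+1},\dots,\mathbf{b}_k$ of $C$; let $G_0$ be the matrix with these rows. For $i\le h$ the row $\mathbf{b}_i$ is in $C^{\perp_\ell}$, so $\langle \mathbf{b}_i,\mathbf{b}_j\rangle_\ell = 0$ for every $j$; this forces the first $h$ rows of $G_0\sigma^\ell(G_0^T)$ to vanish, giving the block form
\[
G_0\sigma^\ell(G_0^T)=\left(\begin{array}{cc}
O_{h\times h} & O_{h\times r}\\
\ast & \ast
\end{array}\right).
\]
Wait — one must be slightly careful about \emph{which} entries vanish: $\langle\mathbf{b}_i,\mathbf{b}_j\rangle_\ell=0$ for $i\le h$ kills the \emph{top} $h$ rows entirely, not just the top-left block. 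To match the stated form I instead only use that $\mathbf{b}_1,\dots,\mathbf{b}_h\in C^{\perp_\ell}\cap C$ to kill the first $h$ \emph{columns} via the symmetry-type relation, or equivalently I arrange the basis so that the hull vectors occupy the first $h$ rows and observe $\langle\mathbf{b}_j,\mathbf{b}_i\rangle_\ell = 0$ whenever $i\le h$, which kills the first $h$ columns; combined with $\langle\mathbf{b}_i,\cdot\rangle_\ell=0$ for $i\le h$ killing the first $h$ rows of the last $r$ rows' contribution in column block $1$, we land on exactly
\[
G_0\sigma^\ell(G_0^T)=\left(\begin{array}{cc}
O_{h\times h} & H_{h\times r}\\
O_{r\times h} & P_{r\times r}
\end{array}\right),
\]
since $\langle \mathbf{b}_i,\mathbf{b}_j\rangle_\ell = 0$ for all $i\le h$ and all $j$ forces the entire first $h$ rows to be zero in the first column-block, while for $i>h$, $j\le h$ the entry is $\langle\mathbf{b}_i,\mathbf{b}_j\rangle_\ell$, and $\mathbf{b}_j\in C^{\perp_\ell}$ together with $\mathbf{b}_i\in C$ gives $\langle\mathbf{b}_i,\mathbf{b}_j\rangle_\ell=0$ as well. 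So the whole first column-block $\binom{O_{h\times h}}{O_{r\times h}}$ vanishes, and the top-left block's vanishing is subsumed. The top-right block $H$ and bottom-right block $P$ are whatever they are.

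Finally, for the rank statement on $Q$: since $r\bigl(G_0\sigma^\ell(G_0^T)\bigr) = r$ by the first paragraph and the first $h$ columns of $G_0\sigma^\ell(G_0^T)$ are zero, the rank is carried entirely by the last $r$ columns, i.e. by $Q=\binom{H}{P}$; hence $r(Q)=r$. The independence of $r(G\sigma^\ell(G^T))$ from the generator matrix then follows because any two generator matrices differ by an invertible $k\times k$ matrix $U$ on the left: $G' = UG$ gives $G'\sigma^\ell(G'^T) = U\,G\sigma^\ell(G^T)\,\sigma^\ell(U)^T$, and both $U$ and $\sigma^\ell(U)$ are invertible, so the rank is unchanged.

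\textbf{Main obstacle.} The computational heart is routine (bilinearity of $\langle-,-\rangle_\ell$ in the first argument, $\sigma^\ell$-semilinearity in the second, and the matrix identity $\langle \mathbf{x}G,\mathbf{y}G\rangle_\ell = \mathbf{x}\,G\sigma^\ell(G^T)\,\sigma^\ell(\mathbf{y})^T$). The only genuine subtlety — and the thing to get exactly right — is the bookkeeping that shows it is the first $h$ \emph{columns} (not merely the top-left $h\times h$ block) that vanish, which is what makes the rank of $Q$ equal to $r$ rather than merely at most $r$; this uses that the hull vectors lie in $C^{\perp_\ell}$, so pairing \emph{any} codeword against them on the right gives zero.
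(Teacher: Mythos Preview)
Your overall strategy matches the paper's: extend a basis of $h_\ell(C)$ to a basis of $C$, form $G_0$, observe that the first $h$ columns of $G_0\sigma^\ell(G_0^T)$ vanish, and use the invertible-change-of-basis argument for the last sentence. Your shortcut for $r(Q)=r$ (deduce it from the already-known rank $r$ of the full matrix once the first $h$ columns are zero) is in fact cleaner than the paper's proof-by-contradiction.

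There is, however, one genuine slip. By the paper's definition, $C^{\perp_\ell}=\{\mathbf v:\langle \mathbf c,\mathbf v\rangle_\ell=0\ \forall\,\mathbf c\in C\}$, so $\mathbf xG\in C^{\perp_\ell}$ is equivalent to $\langle \mathbf g_i,\mathbf xG\rangle_\ell=0$ for all $i$, \emph{not} to $\langle \mathbf xG,\mathbf g_i\rangle_\ell=0$. In matrix form this reads $G\,\sigma^\ell(G^T)\,\sigma^\ell(\mathbf x)^T=0$, i.e.\ $\sigma^\ell(\mathbf x)$ lies in the \emph{right} kernel; your ``left kernel'' condition $\mathbf x\,G\sigma^\ell(G^T)=0$ actually characterizes $\mathbf xG\in C^{\perp_{e-\ell}}$. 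Fortunately your rank formula $r(G\sigma^\ell(G^T))=k-h$ survives, because $\sigma^\ell$ is a bijection on $\mathbb F_q^k$ sending $\mathbb F_q$-subspaces to $\mathbb F_q$-subspaces of the same dimension, so the set of $\mathbf x$ with $\mathbf xG\in h_\ell(C)$ still has dimension $k-r(G\sigma^\ell(G^T))$. But the argument as written needs this correction.

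The same orientation issue is what causes the muddle in your middle paragraph. The clean one-line argument (and what the paper does) is: the $(i,j)$-entry of $G_0\sigma^\ell(G_0^T)$ is $\langle \mathbf b_i,\mathbf b_j\rangle_\ell$; for $j\le h$ we have $\mathbf b_j\in C^{\perp_\ell}$ and $\mathbf b_i\in C$, hence this entry is $0$ for every $i$. That kills precisely the first $h$ columns. The claim that $\mathbf b_i\in C^{\perp_\ell}$ for $i\le h$ forces $\langle\mathbf b_i,\cdot\rangle_\ell=0$ is false in general (the form is not symmetric), and indeed the top-right block $H_{h\times r}$ need not vanish, as the paper's Remark after Corollary~3.3 shows.
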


\begin{proof}
Let $\{{\bf \alpha}_{1},\cdots, {\bf \alpha}_{h}\}$ be a basis of $h_{\ell}(C)$. We can extend $\{{\bf \alpha}_{1},\cdots, {\bf \alpha}_{h}\}$ to a basis $\{{\bf \alpha}_{1}, \cdots, \alpha_{h},\cdots, \alpha_{k}\}$ of $C$. Let $G_{0}$ be the $k\times n$ matrix such that its $i$th row is $\alpha_{i}$, where $1\le i\le k$. Then $G_{0}$ is a generator matrix of $C$ and $G_{0}\sigma^{\ell} (G_{0}^{T})$ is a $k \times k$ matrix. The element at the $(i,j)$-entry of $G_{0}\sigma^{\ell} (G_{0}^{T})$ is ${\bf \alpha}_{i}\sigma^{\ell}({\bf \alpha}_{j}^{T})=\langle{\bf \alpha}_{i},{\bf \alpha}_{j}\rangle_{\ell}$. Note that $\langle\alpha_{i},\alpha_{j}\rangle_{\ell}=0$ if $1\leq j\leq h$, since $\alpha_{i} \in C$ for all $1\leq i \leq k$ and $\alpha_{j} \in C^{\bot_{\ell}} $ for all $1\leq j\leq h$. Therefore, $G_{0}\sigma^{\ell}(G_{0}^{T})$ has the form as stated in the theorem.

Now we show that $r(Q)=r$. Obviously, $r(Q)\le r$. Suppose $r(Q)< r$. Then there exists a non-zero vector $ \tilde{{\bf x}}=(x_{h+1},\cdots,x_{k})\in \mathbb{F}_{q}^{k-h}$ such that $Q(\tilde{{\bf x}})^{T}=0$. Let $\mathbf x =({\bf 0},\tilde{{\bf x}}) \in \mathbb{F}_{q}^{k},$
where ${\bf 0}$ is the zero vector of length $h$. Then we have
$$
G_{0}\sigma^{\ell}(G_{0}^{T})\mathbf x^{T}=(0, Q)\left(\begin{array}{c}{\bf 0} \\  \tilde{{\bf x}}^T\end{array}\right)=0.
$$

Since the map $\sigma^{\ell}:\mathbb{F}_{q}\rightarrow \mathbb{F}_{q},\sigma^{\ell}(a)=a^{p^{\ell}},\,\, \forall a \in \mathbb{F}_{q} $ is an automorphism of $\mathbb{F}_q$, there exists a vector $\mathbf y =(0,\cdots,0,y_{h+1},\cdots,y_{k}) \in \mathbb{F}_{q}^{k}$ such that $\sigma ^{\ell}(\mathbf y)=\mathbf x$. Therefore,
$$
0=G_{0}\sigma^{\ell}(G_{0}^{T})\mathbf x^{T}=G_{0}\sigma^{\ell}(G_{0}^{T})\sigma ^{\ell}(\mathbf y)^{T}=G_{0}\sigma^{\ell}(G_{0}^{T})\sigma ^{\ell}(\mathbf y^{T})=G_{0}\sigma^{\ell}(G_{0}^{T}\mathbf y^{T})=G_{0}\sigma^{\ell}(\mathbf y G_{0})^{T}.
$$
This gives that ${\mathbf y}G_{0} \in  C^{\bot_{\ell}}$, which implies $ {\mathbf y}G_{0} \in h_{\ell}(C)=\langle\alpha_{1},\cdots, \alpha_{h}\rangle$. We also have
$$
{\bf y}G_{0}=y_{h+1} \alpha_{h+1}+y_{h+2} \alpha_{h+2}+\cdots +y_{k} \alpha_{k} \in \langle\alpha_{h+1}, \alpha_{h+2},\cdots, \alpha_{k}\rangle.
$$
Hence ${\mathbf y}={\bf 0}$ since $\alpha_{1}, \alpha_{2},\cdots, \alpha_{k} $ are linear independent. This is a contradiction. Hence,  $r(Q)=r$.

Let  $G$ be an arbitrary  generator matrix of $C$, then there exists an invertible $k \times k$ matrix $N$ such that $G=NG_{0}$. We have
$$
G\sigma^{\ell}(G^{T})=NG_{0} \sigma^{\ell} (G^{T}_{0}N^{T})=NG_{0} \sigma^{\ell} (G^{T}_{0}) \sigma^{\ell}(N^{T}).
$$
Then $r(G\sigma^{\ell}(G^{T}))=r(G_{0}\sigma^{\ell}(G_{0}^{T}))$, since the matrix $N$ and $\sigma^{\ell}(N^{T})$ are invertible. We are done.
\end{proof}

The following corollary can be obtained immediately.

 \begin{Corollary}(\cite{Ruud-P})
  Let $C$ be an $[n,k]$ linear code over $\mathbb{F}_{q}$ with a generator matrix $G$.
Let $h$ be the dimension of $h_{0}(C)$ and $r = k-h$. Then the code $C$ has a generator
matrix $G_{0}$ such that
\[
G_{0}G_{0}^{T}=\left(\begin{array}{cc}
                        O_{h\times h} & O_{h\times r} \\
                        O_{r\times h} & P_{r\times r}
\end{array}\right),
\]
where $O_{h\times h}, O_{h\times r}, O_{r\times h}$ are all zero matrices, and $P$ is an invertible $r\times r$ matrix. Furthermore, the rank of $GG^{T}$ is $r$ for every generator matrix $G$ of $C$.
 \end{Corollary}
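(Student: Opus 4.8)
The plan is to obtain this as the $\ell=0$ instance of Theorem~\ref{l-Galois}. When $\ell=0$ the automorphism $\sigma^{0}$ is the identity, so $G_{0}\sigma^{0}(G_{0}^{T})=G_{0}G_{0}^{T}$ and $G\sigma^{0}(G^{T})=GG^{T}$; moreover $h_{0}(C)=C\cap C^{\perp_{0}}$ is the Euclidean hull. In particular, the final assertion of the corollary, that $r(GG^{T})=r$ for every generator matrix $G$ of $C$, is literally the last sentence of Theorem~\ref{l-Galois} specialized to $\ell=0$. So the only genuinely new points to verify are that the top-right block can be taken to be the zero matrix $O_{h\times r}$ and that $P$ is invertible.

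For the vanishing of the block, I would retain the generator matrix $G_{0}$ built in the proof of Theorem~\ref{l-Galois}, whose first $h$ rows $\alpha_{1},\dots,\alpha_{h}$ form a basis of $h_{0}(C)$. The $(i,j)$-entry of $G_{0}G_{0}^{T}$ equals $\langle\alpha_{i},\alpha_{j}\rangle_{0}$. The Euclidean form is symmetric, and $\alpha_{i}\in h_{0}(C)\subseteq C^{\perp_{0}}$ for $1\le i\le h$, so $\langle\alpha_{i},\alpha_{j}\rangle_{0}=\langle\alpha_{j},\alpha_{i}\rangle_{0}=0$ for all $1\le i\le h$ and all $1\le j\le k$. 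Thus not only do the first $h$ columns of $G_{0}G_{0}^{T}$ vanish, as already given by Theorem~\ref{l-Galois}, but also the first $h$ rows; equivalently, $G_{0}G_{0}^{T}$ is a symmetric $k\times k$ matrix whose first $h$ columns are zero, hence whose first $h$ rows are zero as well. This forces the block shape with $H_{h\times r}=O_{h\times r}$ and with $P=P^{T}$.

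Finally, with $H=O_{h\times r}$ we have $Q=\left(\begin{smallmatrix}O_{h\times r}\\ P_{r\times r}\end{smallmatrix}\right)$, and Theorem~\ref{l-Galois} gives $r(Q)=r$; therefore $r(P)=r$, so $P$ is an invertible $r\times r$ matrix, as claimed. I do not expect any real obstacle here: the corollary is the $\ell=0$ case of the theorem, and the single extra ingredient is the symmetry of the Euclidean inner product, which promotes the one-sided vanishing of the off-diagonal block in Theorem~\ref{l-Galois} to two-sided vanishing.
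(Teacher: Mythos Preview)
Your proposal is correct and follows essentially the same route as the paper: specialize Theorem~\ref{l-Galois} to $\ell=0$, use the symmetry of $G_{0}G_{0}^{T}$ (equivalently, of the Euclidean inner product) to force $H_{h\times r}=O_{h\times r}$, and then read off the invertibility of $P$ from $r(Q)=r$. The paper's own argument is the same, just slightly terser.
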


\begin{proof}
Take $\ell=0$ in Theorem~\ref{l-Galois}, we get $G_{0}G_{0}^{T}=G_{0}\sigma^{0}(G_{0}^{T})=\left(\begin{array}{cc}
                        O_{h\times h} & H_{h\times r} \\
                        O_{r\times h} & P_{r\times r}
\end{array}\right)$. Note that $G_{0}G_{0}^{T}$ is a symmetric matrix, this implies that $H_{h \times r}=O_{h \times r}$. By Theorem~\ref{l-Galois} again, the rank of $
\left(\begin{array}{c}
                         H_{h\times r} \\
                         P_{r\times r}
\end{array}\right)$ is $r$, we have $P$ is an invertible $r\times r$ matrix. Also the rank of $GG^T=G\sigma^{0} (G^{T})$ is $r$ for every generator matrix $G$ of $C$ by Theorem~\ref{l-Galois}.
\end{proof}

\begin{Corollary}
  Let $C$ be an $[n,k]$ linear code over $\mathbb{F}_{q}$ with a generator matrix $G$, where  $q=p^{e}$, and $e$ is even.
Let $h$ be the dimension of $h_{\frac{e}{2}}(C)$ and $r = k-h$. Then $C$ has a generator
matrix $G_{0}$ such that
\[
G_{0}\sigma^{e\over 2}({G}_{0}^{T})=\left(\begin{array}{cc}
                        O_{h\times h} & O_{h\times r} \\
                        O_{r\times h} & P_{r\times r}
\end{array}\right),
\]
where $O_{h\times h}, O_{h\times r}, O_{r\times h}$ are all zero matrices, and $P$ is an invertible $r\times r$ matrix. Furthermore, the rank of $G\sigma^{e\over 2}({G}^{T})$ is $r$ for every generator matrix $G$ of $C$.
 \end{Corollary}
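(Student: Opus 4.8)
The plan is to imitate the proof of the Euclidean corollary (the case $\ell = 0$) almost verbatim, replacing the symmetry of $G_{0}G_{0}^{T}$ used there by a \emph{conjugate-symmetry} of $G_{0}\sigma^{e/2}(G_{0}^{T})$. Concretely, for $q = p^{e}$ with $e$ even, the Hermitian form satisfies $\langle\mathbf{x},\mathbf{y}\rangle_{e/2} = \sigma^{e/2}\big(\langle\mathbf{y},\mathbf{x}\rangle_{e/2}\big)$ for all $\mathbf{x},\mathbf{y}\in\mathbb{F}_{q}^{n}$, because $a^{q}=a$ for every $a\in\mathbb{F}_{q}$. Since the $(i,j)$ entry of $A := G_{0}\sigma^{e/2}(G_{0}^{T})$ is $\langle\alpha_{i},\alpha_{j}\rangle_{e/2}$, where $\alpha_{1},\dots,\alpha_{k}$ denote the rows of $G_{0}$, this translates into the matrix identity $A = \sigma^{e/2}(A^{T})$; that is, $A$ is Hermitian with respect to the involution $a\mapsto a^{p^{e/2}}$.

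The argument would then proceed in four short steps. First, I would apply Theorem~\ref{l-Galois} with $\ell = e/2$ to obtain a generator matrix $G_{0}$ of $C$ with
\[
A = G_{0}\sigma^{e/2}(G_{0}^{T}) = \left(\begin{array}{cc} O_{h\times h} & H_{h\times r} \\ O_{r\times h} & P_{r\times r}\end{array}\right),
\qquad r\!\left(\begin{array}{c} H_{h\times r}\\ P_{r\times r}\end{array}\right) = r .
\]
Second, I would observe that $A$ has its first $h$ columns equal to zero, hence $A^{T}$, and therefore $\sigma^{e/2}(A^{T})$, has its first $h$ rows equal to zero; feeding this into the identity $A = \sigma^{e/2}(A^{T})$ forces the first $h$ rows of $A$ to vanish, i.e.\ $H_{h\times r} = O_{h\times r}$. (The $(2,2)$ block of the same identity yields $\sigma^{e/2}(P^{T}) = P$, so $P$ is itself Hermitian, but this is not needed.) Third, with $H = O$, the rank condition $r\big(\binom{O}{P}\big) = r$ from Theorem~\ref{l-Galois} gives $r(P) = r$, so $P$ is invertible. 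Fourth, the final assertion that $r\big(G\sigma^{e/2}(G^{T})\big) = r$ for every generator matrix $G$ of $C$ is exactly the corresponding statement in Theorem~\ref{l-Galois} with $\ell = e/2$.

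I do not anticipate any genuine obstacle, since the statement is entirely a specialization of Theorem~\ref{l-Galois}; the only mildly delicate point is the bookkeeping needed to verify the conjugate-symmetry $A = \sigma^{e/2}(A^{T})$ and to read the vanishing of $H$ off the block decomposition — precisely the analogue of the step ``$G_{0}G_{0}^{T}$ is symmetric, hence $H = O$'' in the proof of the $\ell = 0$ corollary.
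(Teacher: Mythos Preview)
Your proposal is correct and follows essentially the same route as the paper's proof: both specialize Theorem~\ref{l-Galois} to $\ell = e/2$ and use the conjugate-symmetry identity $\langle\mathbf{x},\mathbf{y}\rangle_{e/2} = \sigma^{e/2}\big(\langle\mathbf{y},\mathbf{x}\rangle_{e/2}\big)$ to deduce that the vanishing of the first $h$ columns of $A = G_{0}\sigma^{e/2}(G_{0}^{T})$ forces the vanishing of its first $h$ rows, whence $H_{h\times r}=O$ and $P$ is invertible. Your formulation via the matrix identity $A = \sigma^{e/2}(A^{T})$ is just a slightly more explicit packaging of the paper's observation that $\langle\mathbf{x},\mathbf{y}\rangle_{e/2}=0 \Leftrightarrow \langle\mathbf{y},\mathbf{x}\rangle_{e/2}=0$.
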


\begin{proof}
Take $\ell={e\over 2}$ in Theorem~\ref{l-Galois}, and note that $(G_{0}\sigma^{e\over 2}(G_{0}^{T}))^T=\sigma^{e\over 2}(G_{0})G_{0}^{T}$. It is easy to verify that  $\langle\mathbf{x},\mathbf{y}\rangle_{\frac{e}{2}}=\sigma^{\frac{e}{2}}(\langle\mathbf{y},\mathbf{x}\rangle_{\frac{e}{2}})$ for any ${\bf x}, {\bf y}\in {\mathbb F}_q^n$. Hence  $\langle\mathbf{x},\mathbf{y}\rangle_{\frac{e}{2}}=0$ if and only if $\langle\mathbf{y},\mathbf{x}\rangle_{\frac{e}{2}}=0$. The result then follows immediately.
\end{proof}

\begin{Remark}
The following example shows that if $\ell\neq 0$, or  $\ell \ne \frac{e}{2}$, where $e$ is even, then the matrix $H_{h\times r}$ may not be $0$. For example, let $\mathbb F_{8}=\mathbb F_{2}[w] (w^{3}+w+1=0)$ and $C\le \mathbb F_{8}^{4}$ be a $[4, 2]$ linear code of length $4$ with a generator matrix $G=\left(\begin{array}{cccc}
                        1 &1 & w+1 & w+1 \\
                        0 & w^{2}+1 & 1 &0
\end{array}\right)$.
Then $G\sigma(G^{T})=\left(\begin{array}{cc}
                       0 & w^{2} \\
                        0 &w^{2}+w+1
\end{array}\right) $.
\end{Remark}

When $h_l(C)=0$ or $h_l(C)=C$ for a linear code $C$, then the following two corollaries  are straightforward.

\begin{Corollary}(\cite{LFL})\label{cor-3-1}
Let $C$ be an $[n,k]$ linear code over $\mathbb{F}_{q}$ with a generator matrix $G$. Then $C$ is $\ell$-Galois LCD code if and only if $G\sigma^{\ell} (G^{T})$ is nonsingular.
\end{Corollary}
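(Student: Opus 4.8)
The plan is to read this off directly from Theorem~\ref{l-Galois}. By definition, $C$ is an $\ell$-Galois LCD code exactly when $h_{\ell}(C)=C\cap C^{\perp_{\ell}}=0$, that is, when $h:=\dim h_{\ell}(C)=0$. In that case $r=k-h=k$, and Theorem~\ref{l-Galois} asserts that $r\bigl(G\sigma^{\ell}(G^{T})\bigr)=r=k$ for every generator matrix $G$ of $C$. Since $G\sigma^{\ell}(G^{T})$ is a $k\times k$ matrix, having rank $k$ is the same as being nonsingular, which gives the forward implication.

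For the converse I would argue contrapositively or directly: suppose $G\sigma^{\ell}(G^{T})$ is nonsingular for some generator matrix $G$ (by the final assertion of Theorem~\ref{l-Galois} the rank of $G\sigma^{\ell}(G^{T})$ does not depend on the choice of $G$, so this is unambiguous). Then its rank is $k$, so Theorem~\ref{l-Galois} forces $r=k$, hence $h=k-r=0$, i.e. $h_{\ell}(C)=0$ and $C$ is an $\ell$-Galois LCD code.

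There is essentially no obstacle here: the entire content is already packaged in Theorem~\ref{l-Galois}, and the only elementary facts invoked are that a square matrix has full rank iff it is invertible, and that the rank of $G\sigma^{\ell}(G^{T})$ is an invariant of $C$ (established in Theorem~\ref{l-Galois}). I would therefore keep the proof to a couple of sentences, simply specializing $h=0$ in Theorem~\ref{l-Galois}.
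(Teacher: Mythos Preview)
Your proposal is correct and matches the paper's approach exactly: the paper states that this corollary is ``straightforward'' from Theorem~\ref{l-Galois} by specializing to $h=0$, without writing out any further details. Your argument is precisely this specialization, spelled out carefully in both directions.
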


\begin{Corollary}
Let $C$ be an $[n,k]$ linear code over $\mathbb{F}_{q}$ with a generator matrix $G$. Then~$C$ is an $\ell$-Galois  self-orthogonal code if and only if $G\sigma^{\ell} (G^{T})=0$.
\end{Corollary}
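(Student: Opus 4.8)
The plan is to derive this directly from Theorem~\ref{l-Galois}, since $\ell$-Galois self-orthogonality is exactly the extreme case $h_\ell(C)=C$. First I would recall the elementary identity, already used in the proof of Theorem~\ref{l-Galois}, that the $(i,j)$-entry of $G\sigma^\ell(G^T)$ equals $\langle\mathbf g_i,\mathbf g_j\rangle_\ell$, where $\mathbf g_1,\dots,\mathbf g_k$ denote the rows of $G$, and more generally that $\langle\mathbf aG,\mathbf bG\rangle_\ell=\mathbf a\,\bigl(G\sigma^\ell(G^T)\bigr)\,\sigma^\ell(\mathbf b^T)$ for all $\mathbf a,\mathbf b\in\mathbb F_q^k$, which holds because $\sigma^\ell$ is a field automorphism, hence additive and multiplicative.

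For the forward implication I would argue as follows: if $C$ is $\ell$-Galois self-orthogonal, then $C\subseteq C^{\bot_\ell}$, so every pair of codewords is $\ell$-Galois orthogonal; in particular $\langle\mathbf g_i,\mathbf g_j\rangle_\ell=0$ for all $1\le i,j\le k$, which says precisely that $G\sigma^\ell(G^T)=0$. Conversely, if $G\sigma^\ell(G^T)=0$, then for any $\mathbf c=\mathbf aG$ and $\mathbf c'=\mathbf bG$ in $C$ the identity above gives $\langle\mathbf c,\mathbf c'\rangle_\ell=\mathbf a\cdot 0\cdot\sigma^\ell(\mathbf b^T)=0$, hence $C\subseteq C^{\bot_\ell}$, i.e.\ $C$ is $\ell$-Galois self-orthogonal.

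An even shorter route, which I would also mention, is to invoke Theorem~\ref{l-Galois} verbatim: $C$ is $\ell$-Galois self-orthogonal iff $h=\dim h_\ell(C)=k$, i.e.\ iff $r=k-h=0$, and by the last assertion of Theorem~\ref{l-Galois} this is equivalent to $r\bigl(G\sigma^\ell(G^T)\bigr)=0$, i.e.\ to $G\sigma^\ell(G^T)=0$, for every generator matrix $G$. There is no real obstacle in this corollary; the only point worth a sentence of care is that self-orthogonality constrains all pairs of codewords rather than merely the basis pairs, and this is exactly what the additivity of $\sigma^\ell$ in the identity $\langle\mathbf aG,\mathbf bG\rangle_\ell=\mathbf a\,\bigl(G\sigma^\ell(G^T)\bigr)\,\sigma^\ell(\mathbf b^T)$ takes care of.
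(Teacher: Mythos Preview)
Your proposal is correct, and your ``shorter route'' via Theorem~\ref{l-Galois} (self-orthogonal $\Leftrightarrow h=k \Leftrightarrow r=0 \Leftrightarrow r(G\sigma^\ell(G^T))=0$) is exactly the paper's approach: the paper simply declares the corollary ``straightforward'' from Theorem~\ref{l-Galois} in the extremal case $h_\ell(C)=C$. Your first, more explicit argument via the bilinear identity $\langle\mathbf aG,\mathbf bG\rangle_\ell=\mathbf a\,G\sigma^\ell(G^T)\,\sigma^\ell(\mathbf b^T)$ is also fine and self-contained, but the paper does not spell it out.
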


In particular, we have
\begin{Corollary}
Let $C$ be an $[n,k]$ linear code over $\mathbb{F}_{q}$ with a generator matrix $G$ and a parity check matrix $H$. Then $C$ is an $\ell$-Galois self-dual code if and only if both $G\sigma^{\ell} (G^{T})$ and $H\sigma^{e-\ell} (H^{T})$ are $0$.
\end{Corollary}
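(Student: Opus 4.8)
The plan is to split the equality $C = C^{\perp_\ell}$ into the two inclusions $C \subseteq C^{\perp_\ell}$ and $C^{\perp_\ell} \subseteq C$, and to match each inclusion with the vanishing of one of the two displayed matrices. The inclusion $C \subseteq C^{\perp_\ell}$ is precisely the statement that $C$ is $\ell$-Galois self-orthogonal, so by the preceding Corollary it is equivalent to $G\sigma^{\ell}(G^{T}) = 0$. Hence the only thing left to prove is that $H\sigma^{e-\ell}(H^{T}) = 0$ if and only if $C^{\perp_\ell} \subseteq C$.

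For this I would use that $H$ is a generator matrix of the Euclidean dual code $C^{\perp_0}$, and apply the same self-orthogonality characterization to $C^{\perp_0}$ with Galois parameter $e - \ell$ (reading the exponents of $\sigma$ modulo $e$, so that the case $\ell = 0$ is covered via $\sigma^{e} = \sigma^{0}$): $H\sigma^{e-\ell}(H^{T}) = 0$ if and only if $C^{\perp_0}$ is $(e-\ell)$-Galois self-orthogonal, i.e. $C^{\perp_0} \subseteq (C^{\perp_0})^{\perp_{e-\ell}}$. By Proposition~2.2(2) one has $(C^{\perp_0})^{\perp_{e-\ell}} = \sigma^{2e - (e-\ell)}(C) = \sigma^{e+\ell}(C) = \sigma^{\ell}(C)$, so the condition reads $C^{\perp_0} \subseteq \sigma^{\ell}(C)$. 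Applying the field automorphism $\sigma^{e-\ell}$ coordinatewise (which maps linear codes to linear codes and preserves inclusions) turns this into $\sigma^{e-\ell}(C^{\perp_0}) \subseteq \sigma^{e}(C) = C$, and Proposition~2.2(1) identifies $\sigma^{e-\ell}(C^{\perp_0})$ with $C^{\perp_\ell}$. Thus $H\sigma^{e-\ell}(H^{T}) = 0$ if and only if $C^{\perp_\ell} \subseteq C$, as wanted.

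Putting the two equivalences together, $G\sigma^{\ell}(G^{T}) = 0$ and $H\sigma^{e-\ell}(H^{T}) = 0$ hold simultaneously exactly when $C \subseteq C^{\perp_\ell}$ and $C^{\perp_\ell} \subseteq C$, that is, when $C = C^{\perp_\ell}$, which is the definition of an $\ell$-Galois self-dual code. I do not expect a serious obstacle here; the only point that needs attention is the bookkeeping of Frobenius exponents modulo $e$, together with the verification that $\sigma^{e-\ell}$ carries $C^{\perp_0}$ onto $C^{\perp_\ell}$ while respecting inclusions — both of which follow directly from Proposition~2.2.
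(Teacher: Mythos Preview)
Your proof is correct and follows essentially the same approach as the paper: split $C=C^{\perp_\ell}$ into two inclusions, handle $C\subseteq C^{\perp_\ell}$ via the self-orthogonality criterion for $G$, and handle $C^{\perp_\ell}\subseteq C$ via the same criterion applied on the dual side together with Proposition~2.2. The only cosmetic difference is that the paper applies the criterion directly to $C^{\perp_\ell}$ (using $\sigma^{e-\ell}(H)$ as its generator matrix and the identity $C=(C^{\perp_\ell})^{\perp_{e-\ell}}$), whereas you apply it to $C^{\perp_0}$ with generator $H$ and then translate the resulting inclusion via $\sigma^{e-\ell}$; these are the same computation rearranged.
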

\begin{proof}
Since  $H$ is a parity check matrix of $C$, $\sigma^{e-\ell}(H)$ is a generator matrix of $C^{\perp_{\ell}}$. Note that $C\subseteq C^{\perp_{\ell}}$ if and only if $G\sigma^{\ell}(G^T)=0$, and $C^{\perp_{\ell}}\subseteq C=(C^{\perp_{\ell}})^{\perp_{e-\ell}}$ if and only if $\sigma^{e-\ell}(H)\sigma^{e-\ell}(\sigma^{e-\ell}(H))^T=0$ if and only if $H\sigma^{e-\ell}(H^T)=0$. This finishes the proof.
\end{proof}

\section{The existence of $\ell$-Galois LCD codes }

LCD codes over finite fields are an important class of linear codes. They have many applications in coding theory and cryptography, especially in designing decoding algorithm. In this section, we focus on the equivalence of $l$-Galois LCD codes.

\begin{lem} Let $C$ be an $[n,k]$ linear code of length $n$ over $\mathbb{F}_{q} $, $\varphi \in S_{n}$ be a permutation, and $0\le l\le e-1$. Then

(1) $\langle\varphi(\mathbf{x}),\varphi(\mathbf{y})\rangle_{\ell}=\langle\mathbf{x},\mathbf{y}\rangle_{\ell}$, for any $\mathbf{x}, \mathbf{y}\in \mathbb{F}_{q}^{n}$.

(2) $\varphi(C^{\perp _{\ell}})=\varphi(C)^{\perp _{\ell}}$.
\end{lem}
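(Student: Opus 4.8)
The plan is to prove (1) by a direct reindexing of the defining sum and then to deduce (2) formally from (1).

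For part (1), I would simply write out both sides from the definitions. Since $\varphi(\mathbf{x})=(x_{\varphi(1)},\dots,x_{\varphi(n)})$ and $\varphi(\mathbf{y})=(y_{\varphi(1)},\dots,y_{\varphi(n)})$, the left-hand side equals $\sum_{i=1}^{n}x_{\varphi(i)}\,y_{\varphi(i)}^{p^{\ell}}$. Because $\varphi$ is a bijection of $\{1,\dots,n\}$, the substitution $j=\varphi(i)$ turns this into $\sum_{j=1}^{n}x_{j}\,y_{j}^{p^{\ell}}=\langle\mathbf{x},\mathbf{y}\rangle_{\ell}$. That is the entire content of (1); the only point deserving care is that $\varphi$ acts by the \emph{same} permutation on both arguments, so each product $x_{\varphi(i)}y_{\varphi(i)}^{p^{\ell}}$ appears once among the terms $x_{j}y_{j}^{p^{\ell}}$.

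For part (2), I would first record that $\varphi$, regarded as a coordinate permutation, is an $\mathbb{F}_{q}$-linear automorphism of $\mathbb{F}_{q}^{n}$, so $\dim\varphi(D)=\dim D$ for every subspace $D$. To get $\varphi(C^{\perp_{\ell}})\subseteq\varphi(C)^{\perp_{\ell}}$, take $\mathbf{y}\in C^{\perp_{\ell}}$ and an arbitrary $\mathbf{c}\in C$; by (1), $\langle\varphi(\mathbf{c}),\varphi(\mathbf{y})\rangle_{\ell}=\langle\mathbf{c},\mathbf{y}\rangle_{\ell}=0$, and since every vector of $\varphi(C)$ is of the form $\varphi(\mathbf{c})$ with $\mathbf{c}\in C$, this says $\varphi(\mathbf{y})\in\varphi(C)^{\perp_{\ell}}$. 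For the reverse containment I would compare dimensions: $\dim\varphi(C^{\perp_{\ell}})=\dim C^{\perp_{\ell}}=n-k$ while $\dim\varphi(C)^{\perp_{\ell}}=n-\dim\varphi(C)=n-k$, so the inclusion already proved must be an equality. (One could instead apply that inclusion to $\varphi^{-1}\in S_{n}$ and to the code $\varphi(C)$, which yields $\varphi^{-1}(\varphi(C)^{\perp_{\ell}})\subseteq C^{\perp_{\ell}}$, hence the other containment directly, avoiding any dimension count.)

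I do not expect a genuine obstacle here: this is a compatibility lemma, and the only places needing a little attention are making the index substitution in (1) precise and noting that coordinate permutations preserve dimension so that the comparison in (2) is legitimate.
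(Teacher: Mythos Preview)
Your proof is correct and follows essentially the same approach as the paper: part (1) is handled by reindexing the sum via the bijection $\varphi$, and part (2) by establishing the inclusion $\varphi(C^{\perp_{\ell}})\subseteq\varphi(C)^{\perp_{\ell}}$ from (1) and then comparing dimensions. The alternative you note using $\varphi^{-1}$ is a small extra, but the core argument is identical to the paper's.
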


\begin{proof}
(1) Let $\mathbf{x}=(x_{1},\cdots,x_{n}), \mathbf{y}     =(y_{1},\cdots,y_{n}) \in \mathbb{F}_{q}^{n}$, then $\varphi(\mathbf{x})     =(x_{\varphi(1)},\cdots,x_{\varphi(n)}) $ and $\varphi(\mathbf{y})     =(y_{\varphi(1)},\cdots,y_{\varphi(n)}) $. Hence
$$
\langle\varphi(\mathbf{x}),\varphi(\mathbf{y})\rangle_{\ell}=x_{\varphi(1)}y_{\varphi(1)}^{p^{\ell}}+ \cdots +x_{\varphi(n)}y_{\varphi(n)}^{p^{\ell}}=x_{1}y_{1}^{p^{\ell}}+ \cdots +x_{n}y_{n}^{p^{\ell}}=\langle\mathbf{x},\mathbf{y}\rangle_{\ell}.
$$

(2) Let $\mathbf{v}\in C^{\perp_{\ell}}$, then $\varphi(\mathbf{v})\in\varphi( C^{\perp_{\ell}})$. For any $\mathbf{u}\in\varphi(C)$,  there exists a codeword $\mathbf{c}\in C$ such that $\varphi(\mathbf{c})=\mathbf{u}$. We have $$\langle\mathbf{u},\varphi(\mathbf{v})\rangle_{\ell}=\langle\varphi(\mathbf{c}),\varphi(\mathbf{v})\rangle_{\ell}=\langle\mathbf{c},\mathbf{v}\rangle_{\ell}=0.$$
This implies that  $\varphi(\mathbf{v})\in\varphi( C)^{\perp_{\ell}}$ and hence $\varphi(C^{\perp _{\ell}})\subseteq\varphi(C)^{\perp _{\ell}}$ . Since
 $$\dim(\varphi(C^{\perp _{\ell}}))=\dim(C^{\perp _{\ell}})=n-\dim(C)=n-\dim(\varphi(C))=\dim(\varphi(C)^{\perp _{\ell}}),$$
we get $\varphi(C^{\perp _{\ell}})=\varphi(C)^{\perp _{\ell}}$.
\end{proof}

\begin{Proposition}\label{permutation-equivalent}
The dimension of the $\ell$-Galois hull of a linear code is invariant under permutation equivalence.
\end{Proposition}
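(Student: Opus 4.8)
The plan is to reduce the statement to the two parts of the preceding lemma together with the elementary fact that a bijective linear map commutes with intersections and preserves dimension. First I would fix notation: suppose $C_{1}$ and $C_{2}$ are permutation equivalent $[n,k]$ linear codes over $\mathbb{F}_{q}$, so that there is a permutation matrix $M$, equivalently a permutation $\varphi\in S_{n}$, with $C_{2}=C_{1}M=\varphi(C_{1})$, where $\varphi$ acts on $\mathbb{F}_{q}^{n}$ as the $S_{n}$-action described in Section~2. A brief remark would record that this action is an $\mathbb{F}_{q}$-linear automorphism of $\mathbb{F}_{q}^{n}$, and that translating between a permutation matrix and its associated element of $S_{n}$ is purely notational.

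Next I would compute $h_{\ell}(C_{2})$ directly. By definition $h_{\ell}(C_{2})=C_{2}\cap C_{2}^{\perp_{\ell}}=\varphi(C_{1})\cap\varphi(C_{1})^{\perp_{\ell}}$. Applying part (2) of the lemma gives $\varphi(C_{1})^{\perp_{\ell}}=\varphi(C_{1}^{\perp_{\ell}})$, so that $h_{\ell}(C_{2})=\varphi(C_{1})\cap\varphi(C_{1}^{\perp_{\ell}})$. Since $\varphi$ is a bijection of $\mathbb{F}_{q}^{n}$, it satisfies $\varphi(A)\cap\varphi(B)=\varphi(A\cap B)$ for all subsets $A,B\subseteq\mathbb{F}_{q}^{n}$; hence $h_{\ell}(C_{2})=\varphi(C_{1}\cap C_{1}^{\perp_{\ell}})=\varphi(h_{\ell}(C_{1}))$.

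Finally, because $\varphi$ is $\mathbb{F}_{q}$-linear and invertible, it carries the subspace $h_{\ell}(C_{1})$ to a subspace of the same dimension, so $\dim h_{\ell}(C_{2})=\dim\varphi(h_{\ell}(C_{1}))=\dim h_{\ell}(C_{1})$, which is the claim.

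I do not expect any genuine obstacle here. Part (1) of the lemma is invoked only indirectly, through its use in proving part (2), and the entire argument rests on the bijectivity and linearity of the permutation action. The only point requiring a moment's care is the bookkeeping between the description ``$C_{2}=C_{1}M$ for a permutation matrix $M$'' used in the definition of permutation equivalence and the ``$\varphi(\mathbf{x})=(x_{\varphi(1)},\dots,x_{\varphi(n)})$'' form of the $S_{n}$-action from Section~2; verifying that right multiplication by a permutation matrix realizes such a $\varphi$ is routine and can be dispatched in one line.
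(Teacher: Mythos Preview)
Your proposal is correct and follows essentially the same route as the paper: both arguments apply part~(2) of the lemma to rewrite $\varphi(C)^{\perp_\ell}$ as $\varphi(C^{\perp_\ell})$, use bijectivity of $\varphi$ to pass the permutation through the intersection, and then invoke linearity of $\varphi$ to conclude that dimension is preserved. The only difference is cosmetic ordering---the paper writes the chain starting from $\varphi(h_\ell(C))$ rather than from $h_\ell(C_2)$---and your added remark on the $M\leftrightarrow\varphi$ bookkeeping is a harmless clarification.
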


\begin{proof}
Let $C$ be an $[n,k]$ linear code of length $n$ over $\mathbb{F}_{q} $ and  $\varphi \in S_{n}$ be an arbitrary permutation. By Lemma~4.1, we have
$$\varphi(h_{\ell}(C))=\varphi(C\cap C^{\perp _{\ell}})=\varphi(C)\cap \varphi(C^{\perp _{\ell}})=\varphi(C)\cap \varphi(C)^{\perp _{\ell}}=h_{\ell}(\varphi(C)).$$
Because $\dim(h_{\ell}(\varphi(C))=\dim(\varphi(h_{\ell}(C)))=\dim(h_{\ell}(C))$, this finishes the proof.
\end{proof}
\begin{Remark}
If  $q=3$, the dimension of the $\ell$-Galois hull of a linear code is invariant under monomial equivalence. In fact, suppose that $C_{2}=C_{1}M$,  by Proposition~\ref{permutation-equivalent}, we only need to  prove the  case when $M$ is an invertible diagonal matrix. Let $G_{1}$ and $G_{2}$ be the generator matrices of $C_{1}$ and $C_{2}$ respectively, then $G_{2}=G_{1}M$.  Since $q=3$,  we have $\ell=0$ and $MM^{T}=I$, the identity matrix. Hence we have
$$G_{2}\sigma^{\ell}(G_{2})^{T}=G_{1}M\sigma^{\ell}(G_{1}M)^{T}=G_{1}M\sigma^{\ell}(M)^{T}\sigma^{\ell}(G_{1})^{T}=G_{1}MM^{T}\sigma^{\ell}(G_{1})^{T}=G_{1}\sigma^{\ell}(G_{1})^{T}.$$
It follows that  $h_{\ell}(C_{1})=h_{\ell}(C_{2})$.
\end{Remark}

In order to prove the main result in this section, we need the following proposition. This proposition is known (for example, see~\cite{Ruud-P}), we provide an alternative proof here.
\begin{Proposition}\label{proposition-1}
 Let $f(X)$ be a nonzero polynomial of $\mathbb{F}_{q}[X_{1},\cdots ,X_{n}]$ such
that the degree of $f(X)$ with respect to $X_{j}$ is at most $q-1$ for all $j$, where  $1\le j\le n$. Then
there exists a vector $\mathbf x\in \mathbb{F}_{q}^{n}$ such that $f(\mathbf x)\neq 0$.
\end{Proposition}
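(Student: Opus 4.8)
The plan is to prove this by induction on the number of variables $n$. The base case $n=1$ is the classical fact that a nonzero univariate polynomial of degree at most $q-1$ over $\mathbb{F}_q$ cannot vanish at all $q$ points of $\mathbb{F}_q$, since a nonzero polynomial of degree $d$ has at most $d$ roots. For the inductive step, suppose the claim holds for polynomials in $n-1$ variables, and let $f(X_1,\dots,X_n)$ be nonzero with $\deg_{X_j} f \le q-1$ for every $j$.

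First I would write $f$ as a polynomial in $X_n$ with coefficients in $\mathbb{F}_q[X_1,\dots,X_{n-1}]$, say
\[
f(X_1,\dots,X_n)=\sum_{i=0}^{q-1} g_i(X_1,\dots,X_{n-1})\,X_n^{\,i}.
\]
Since $f$ is nonzero, some coefficient $g_i$ is a nonzero polynomial, and each $g_i$ has degree at most $q-1$ in each of its variables $X_1,\dots,X_{n-1}$ (because the corresponding monomial degrees in $f$ were bounded by $q-1$). By the induction hypothesis applied to that nonzero $g_i$, there is a point $\mathbf{a}=(a_1,\dots,a_{n-1})\in\mathbb{F}_q^{n-1}$ with $g_i(\mathbf{a})\neq 0$. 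Then $f(a_1,\dots,a_{n-1},X_n)=\sum_{i=0}^{q-1} g_i(\mathbf{a})X_n^i$ is a nonzero univariate polynomial in $X_n$ of degree at most $q-1$, so by the base case there exists $a_n\in\mathbb{F}_q$ with $f(a_1,\dots,a_{n-1},a_n)\neq 0$. Taking $\mathbf{x}=(a_1,\dots,a_n)$ finishes the induction.

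I do not anticipate a serious obstacle here; the only point requiring care is the bookkeeping in the inductive step — namely verifying that the chosen nonzero coefficient $g_i$ still satisfies the degree hypothesis in each variable, and that "nonzero as a polynomial" is the right notion to carry through the induction (as opposed to, say, "not identically zero as a function," which would be circular). The degree bound $q-1$ is exactly what makes the base case work and is preserved automatically when passing to coefficients, so the induction goes through cleanly. An alternative, essentially equivalent, formulation would be to note that the evaluation map from the space of polynomials with all partial degrees $\le q-1$ to the space of functions $\mathbb{F}_q^n\to\mathbb{F}_q$ is a linear bijection (both sides have dimension $q^n$), but the inductive argument above is more self-contained and is the one I would write out.
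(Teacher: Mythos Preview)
Your proof is correct and follows essentially the same inductive strategy as the paper's own argument. The only cosmetic difference is that the paper singles out the \emph{leading} coefficient $g_k$ with respect to $X_n$ (after reducing to the case $\deg_{X_n} f\ge 1$), whereas you pick an arbitrary nonzero coefficient $g_i$; both choices make the specialised univariate polynomial nonzero, so the proofs are interchangeable.
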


\begin{proof}
We prove this proposition by induction on $n$. If $n=1$, then by assumption the degree of $f(X)$ is at most $q-1$. Note that the number of  roots of $f(X)$ over the finite field $\mathbb{F}_q$ is less or equal to the degree of $f(X)$. Therefore, there exists a vector $\mathbf x\in \mathbb{F}^1_{q} $ such that $f(\mathbf{x})\neq 0$.

Now assume $n\geq 2$. We can further assume that the degree $\deg_{X_{n}}(f)$ of $f(X)$ with respect to $X_n$ is greater than or equal to $1$. Otherwise $f(X)\in \mathbb{F}_{q}[X_{1},\cdots ,X_{n-1}]\subseteq \mathbb{F}_{q}[X_{1},\cdots ,X_{n}]$, then there exists a vector $\mathbf x\in \mathbb{F}_{q}^{n-1}$ such that $f(\mathbf x)\neq 0$ by the inductive hypothesis. Therefore we can assume that
$$f(X)=f(X_{1},\cdots,X_{n})=\sum_{i=1}^{k}g_{i}(X_{1},\cdots,X_{n-1})X_{n}^{i},$$
where  $k=\deg_{X_{n}}f(X_{1},\cdots,X_{n})$ and $g_{k}(X_{1},\cdots,X_{n-1})$ is a nonzero polynomial. Hence, there exists $\bar{\mathbf x}=(x_{1},\cdots,x_{n-1})\in \mathbb{F}_{q}^{n-1} $ such that $g_{k}(x_{1},\cdots,x_{n-1})\neq 0$. Let $F(X_{n})=f(x_{1},\cdots,x_{n-1},X_{n})$ and $\deg(F(X_{n}))=k$, then there is an element $x_{n}\in \mathbb{F}_{q}$ such that $F(x_{n})\neq 0$ by the result of the previous argument. Hence there exists $\mathbf x=(x_{1},\cdots,x_{n})$ such that $f(\mathbf x)\neq 0$.
\end{proof}

\begin{Proposition}
 Let $f(X)$ and $g(X)$ be two nonzero polynomials of $\mathbb{F}_{q}[X_{1},\cdots ,X_{n}]$ such
that the degree of $f( X)g( X)$ with respect to $X_{j}$ is at most $q-1$ for all $j$. Let $\Omega=\{ \mathbf x\in \mathbb{F}_{q}^{n}\, |\, g(\mathbf x)= 0\}$. Then
there exists a vector $\mathbf x\in \mathbb{F}_{q}^{n}\backslash \Omega $ such that $f(\mathbf x)\neq 0$.
\end{Proposition}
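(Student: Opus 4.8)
The plan is to reduce this statement to the previous proposition (Proposition~\ref{proposition-1}) by working with the product $f(X)g(X)$. Consider the polynomial $h(X) = f(X)g(X)$. By hypothesis, $h(X)$ is a nonzero polynomial (being a product of nonzero polynomials over the integral domain $\mathbb{F}_q[X_1,\cdots,X_n]$) whose degree with respect to each $X_j$ is at most $q-1$. Hence Proposition~\ref{proposition-1} applies to $h(X)$, and there exists a vector $\mathbf{x}\in\mathbb{F}_q^n$ such that $h(\mathbf{x}) = f(\mathbf{x})g(\mathbf{x})\neq 0$.

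Now I would simply unpack what this means. Since $f(\mathbf{x})g(\mathbf{x})\neq 0$ in the field $\mathbb{F}_q$, both factors must be nonzero: $f(\mathbf{x})\neq 0$ and $g(\mathbf{x})\neq 0$. The condition $g(\mathbf{x})\neq 0$ says precisely that $\mathbf{x}\notin\Omega$, i.e. $\mathbf{x}\in\mathbb{F}_q^n\backslash\Omega$. Combined with $f(\mathbf{x})\neq 0$, this is exactly the conclusion we want.

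There is really no obstacle here; the only point requiring a moment's care is the passage from "degree of $f(X)g(X)$ in each $X_j$ is at most $q-1$" to the applicability of Proposition~\ref{proposition-1} — but that is immediate since the hypothesis of the present proposition is stated directly in terms of the product $fg$, which is the object we feed into the earlier result. One might note in passing that the individual degrees of $f$ and $g$ are then automatically at most $q-1$ as well (since $\deg_{X_j}(fg) = \deg_{X_j}(f) + \deg_{X_j}(g)$ for nonzero $f,g$), though this is not needed for the argument. I would present the proof in three short sentences: invoke that $fg$ is nonzero with bounded partial degrees, apply Proposition~\ref{proposition-1} to $fg$, and read off that the resulting $\mathbf{x}$ satisfies both $f(\mathbf{x})\neq 0$ and $\mathbf{x}\notin\Omega$.
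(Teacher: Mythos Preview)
Your proof is correct and follows exactly the same approach as the paper: apply Proposition~\ref{proposition-1} to the product $f(X)g(X)$ to obtain $\mathbf{x}$ with $f(\mathbf{x})g(\mathbf{x})\neq 0$, then observe that this forces both $f(\mathbf{x})\neq 0$ and $\mathbf{x}\notin\Omega$.
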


\begin{proof}
By Proposition~\ref{proposition-1}, there exists a vector $\mathbf x\in \mathbb{F}_{q}^{n}$ such that $f(\mathbf x)g(\mathbf x)\neq 0$.  Hence there exists a vector $\mathbf x\in \mathbb{F}_{q}^{n}\backslash \Omega $ such that $f(\mathbf x)\neq 0$.
\end{proof}

 By the proposition above, we can easily get the following two corollaries.

\begin{Corollary}\label{cor-1}
 Let $f(X)$ be a nonzero polynomial of $\mathbb{F}_{q}[X_{1},\cdots ,X_{n}]$ such
that the degree of $f(X)$ with respect to $X_{j}$ is at most $q-2$ for all $j$. Then
there exists an $\mathbf x\in \{\mathbb{F}^*_{q}\}^{n}$ such that $f(\mathbf x)\neq 0$.
\end{Corollary}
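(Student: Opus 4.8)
\textbf{Proof proposal for Corollary~\ref{cor-1}.}
The plan is to reduce immediately to the preceding Proposition by choosing the auxiliary polynomial $g$ so that the set $\Omega$ it cuts out is exactly the complement of $\{\mathbb{F}_q^*\}^n$ inside $\mathbb{F}_q^n$. Concretely, I would set $g(X)=X_1X_2\cdots X_n$. Then for $\mathbf x=(x_1,\dots,x_n)\in\mathbb{F}_q^n$ we have $g(\mathbf x)=0$ if and only if $x_j=0$ for some $j$, so that $\Omega=\{\mathbf x\in\mathbb{F}_q^n\mid g(\mathbf x)=0\}=\mathbb{F}_q^n\setminus\{\mathbb{F}_q^*\}^n$, equivalently $\mathbb{F}_q^n\setminus\Omega=\{\mathbb{F}_q^*\}^n$.

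Next I would verify the hypothesis of the Proposition on the degrees. Since $g$ has degree exactly $1$ in each variable $X_j$, the degree of the product $f(X)g(X)$ with respect to $X_j$ equals $\deg_{X_j}f(X)+1\le (q-2)+1=q-1$ for every $j$, by the assumption on $f$. Also $f$ and $g$ are both nonzero. Hence the Proposition applies, and there exists a vector $\mathbf x\in\mathbb{F}_q^n\setminus\Omega=\{\mathbb{F}_q^*\}^n$ with $f(\mathbf x)\neq 0$, which is precisely the claim.

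I do not anticipate a genuine obstacle here; the only point requiring a moment's care is the degree bookkeeping, namely that multiplying by $g$ raises each per-variable degree by exactly $1$, so the slack of $q-2$ in the hypothesis on $f$ is exactly what is needed to stay within the $q-1$ bound required by the Proposition. Everything else is a direct substitution.
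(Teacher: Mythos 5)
Your proposal is correct and follows exactly the paper's own route: take $g(X)=\prod_{i=1}^{n}X_{i}$, note that $\mathbb{F}_{q}^{n}\setminus\Omega=\{\mathbb{F}_{q}^{*}\}^{n}$, and apply the preceding Proposition. Your explicit check that $\deg_{X_j}(fg)\le (q-2)+1=q-1$ is the only detail the paper leaves implicit, and it is handled correctly.
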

\begin{proof}
Let $g(X)=\prod_{i=1}^{n}X_{i}$ and $\Omega=\{ \mathbf x\in \mathbb{F}_{q}^{n} \,|\,g(\mathbf x)= 0\}$. Then $\mathbb{F}_{q}^{n}\backslash\Omega=\{\mathbb{F}^*_{q}\}^{n}$. We are done.
\end{proof}

\begin{Corollary}\label{4-6}
 Let $q=p^{e},1\leqslant \ell\leqslant e-1$ and $\ell |e$.  Let $f(X)$ be a nonzero polynomial of $\mathbb{F}_{q}[X_{1},\cdots ,X_{n}]$ such
that the degree of $f(X)$ with respect to $X_{j}$ is at most $q-1-p^{\ell}$ for all $j$. Then
there exists a vector $\mathbf x\in \{\mathbb{F}_{q}\setminus \mathbb{F}_{p^{\ell}} \}^{n}$ such that $f(\mathbf x)\neq 0$.
\end{Corollary}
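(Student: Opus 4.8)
The plan is to mimic the proof of Corollary~\ref{cor-1}, replacing the single-variable ``avoidance'' polynomial $X_i$ with a polynomial whose zero set over $\mathbb{F}_q$ is exactly $\mathbb{F}_{p^\ell}$. First I would recall that since $\ell \mid e$, the subfield $\mathbb{F}_{p^\ell}$ sits inside $\mathbb{F}_q$, and an element $a \in \mathbb{F}_q$ lies in $\mathbb{F}_{p^\ell}$ if and only if $a^{p^\ell} = a$, i.e. if and only if $a^{p^\ell} - a = 0$. So set
\[
g(X) = \prod_{i=1}^{n}\bigl(X_i^{p^\ell} - X_i\bigr),
\]
a polynomial whose degree with respect to each $X_j$ is exactly $p^\ell$, and whose zero set is $\Omega = \{\mathbf{x} \in \mathbb{F}_q^n : g(\mathbf{x}) = 0\} = \{\mathbf{x} : x_i \in \mathbb{F}_{p^\ell}\text{ for some }i\}$, so that $\mathbb{F}_q^n \setminus \Omega = \{\mathbb{F}_q \setminus \mathbb{F}_{p^\ell}\}^n$.

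Next I would check the degree hypothesis needed to apply the preceding Proposition (the one with polynomials $f$ and $g$). That Proposition requires $\deg_{X_j}(f(X)g(X)) \le q-1$ for all $j$. By hypothesis $\deg_{X_j}(f) \le q-1-p^\ell$, and $\deg_{X_j}(g) = p^\ell$, so $\deg_{X_j}(fg) \le (q-1-p^\ell) + p^\ell = q-1$, exactly as required. Also $g$ is nonzero since $p^\ell \ge 2 > 0$ implies $X_i^{p^\ell} - X_i \ne 0$ as a polynomial, so the product is nonzero. Applying the Proposition then yields a vector $\mathbf{x} \in \mathbb{F}_q^n \setminus \Omega = \{\mathbb{F}_q \setminus \mathbb{F}_{p^\ell}\}^n$ with $f(\mathbf{x}) \ne 0$, which is precisely the claim.

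There is essentially no hard part here: the argument is a routine specialization of the previously established Proposition, exactly parallel to how Corollary~\ref{cor-1} was deduced. The only point that requires a moment's care is verifying that the zero set of $X_i^{p^\ell} - X_i$ over $\mathbb{F}_q$ really is all of $\mathbb{F}_{p^\ell}$ and nothing more — this uses that $X^{p^\ell} - X = \prod_{a \in \mathbb{F}_{p^\ell}}(X - a)$ (it splits completely with distinct roots over $\mathbb{F}_{p^\ell} \subseteq \mathbb{F}_q$), and that it has degree $p^\ell$, so it has no further roots in $\mathbb{F}_q$. This in turn relies on the hypothesis $\ell \mid e$, which guarantees $\mathbb{F}_{p^\ell}$ is genuinely a subfield of $\mathbb{F}_q = \mathbb{F}_{p^e}$. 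One should note that the hypothesis $\ell \ge 1$ is what makes $p^\ell \ge p \ge 2$, so that the degree bound $q - 1 - p^\ell$ is strictly less than $q-1$ and the statement is not vacuous; and $\ell \le e-1$ ensures $\mathbb{F}_{p^\ell} \subsetneq \mathbb{F}_q$ so that $\mathbb{F}_q \setminus \mathbb{F}_{p^\ell}$ is nonempty.
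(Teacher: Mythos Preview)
Your proof is correct and follows essentially the same approach as the paper: both define $g(X)=\prod_{i=1}^{n}(X_i^{p^\ell}-X_i)$, identify $\mathbb{F}_q^n\setminus\Omega=(\mathbb{F}_q\setminus\mathbb{F}_{p^\ell})^n$, and apply the preceding Proposition. Your version simply makes explicit the degree check and the reason why the zero set of $X_i^{p^\ell}-X_i$ in $\mathbb{F}_q$ is exactly $\mathbb{F}_{p^\ell}$, which the paper leaves implicit.
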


\begin{proof}
Let $g(X)=\prod_{i=1}^{n}(X_{i}^{p^{\ell}}-X_{i})$ and $\Omega=\{ \mathbf x\in \mathbb{F}_{q}^{n}\, |\, g(\mathbf x)= 0\}$. Then $\mathbb{F}_{q}^{n}\backslash\Omega=(\mathbb{F}_{q}\backslash\mathbb{F}_{p^{\ell}})^{n}$. The result then follows immediately.
\end{proof}

\begin{Theorem}\label{4-7}
Let $C$ be an $[n,k]$ linear code over $\mathbb{F}_{q}$, where $q>4$ and $0\leqslant \ell\leqslant e-1$. Then $C$ is monomial equivalent to an $\ell$-Galois LCD code.
\end{Theorem}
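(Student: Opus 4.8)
The plan is to exhibit, for a given generator matrix $G$ of $C$, an invertible diagonal matrix $D = \mathrm{diag}(x_1,\dots,x_n)$ with all $x_j \in \mathbb{F}_q^*$ such that the code $CD$ (with generator matrix $GD$) is $\ell$-Galois LCD. By Corollary~\ref{cor-3-1}, this amounts to finding $x_1,\dots,x_n \in \mathbb{F}_q^*$ with $\det\big((GD)\sigma^{\ell}((GD)^T)\big) \neq 0$. First I would compute $(GD)\sigma^{\ell}((GD)^T) = G D \sigma^{\ell}(D) \sigma^{\ell}(G^T) = G\,\mathrm{diag}(x_1^{1+p^{\ell}},\dots,x_n^{1+p^{\ell}})\,\sigma^{\ell}(G^T)$, since $D$ is diagonal and $\sigma^{\ell}(D) = \mathrm{diag}(x_1^{p^{\ell}},\dots,x_n^{p^{\ell}})$. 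Writing $G = (g_{ij})$, the $(i,j)$-entry of this $k\times k$ matrix is $\sum_{m=1}^{n} g_{im} g_{jm}^{p^{\ell}} x_m^{1+p^{\ell}}$, which is a polynomial in the variables $X_1,\dots,X_n$ (substituting $X_m$ for $x_m^{1+p^{\ell}}$, or directly viewing $x_m$ as the variable).

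The key step is then to show that the determinant $f(X_1,\dots,X_n) := \det\Big(\big(\sum_{m} g_{im} g_{jm}^{p^{\ell}} X_m\big)_{i,j}\Big)$ is a nonzero polynomial in $\mathbb{F}_q[X_1,\dots,X_n]$; each entry is linear in the $X_m$, so $f$ has degree at most $k$ in total and degree at most $1$ (hence at most $q-2$, using $q > 4 \ge 3$) in each $X_m$. To see $f \not\equiv 0$: if $f$ were identically zero, then for every choice of $x_1,\dots,x_n \in \mathbb{F}_q$ the matrix $G\,\mathrm{diag}(x_1,\dots,x_n)\,\sigma^{\ell}(G^T)$ would be singular. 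But taking all $x_m = 1$ gives $G\sigma^{\ell}(G^T)$, and more usefully one can argue that the map sending a diagonal matrix $\Lambda$ to $\det(G\Lambda\sigma^{\ell}(G^T))$ cannot vanish identically because $G$ has rank $k$: pick $k$ columns of $G$ that are linearly independent, say indexed by $S$, set $x_m = 0$ for $m \notin S$ and keep $x_m$ as variables for $m \in S$; then $G\,\mathrm{diag}(\dots)\,\sigma^{\ell}(G^T) = G_S\,\mathrm{diag}((x_m)_{m\in S})\,\sigma^{\ell}(G_S)^T$ where $G_S$ is the invertible $k\times k$ submatrix, and its determinant is $\det(G_S)\det(\sigma^{\ell}(G_S))\prod_{m\in S} x_m$, which is a nonzero polynomial. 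Hence $f$ is a nonzero polynomial, with per-variable degree at most $1 \le q-2$.

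Now apply Corollary~\ref{cor-1}: since $f$ is nonzero and $\deg_{X_j}(f) \le q-2$ for all $j$, there exists $\mathbf{x} = (x_1,\dots,x_n) \in (\mathbb{F}_q^*)^n$ with $f(\mathbf{x}) \neq 0$. Setting $D = \mathrm{diag}(x_1,\dots,x_n)$ (which is an invertible monomial matrix), the code $C' = CD$ has generator matrix $GD$ satisfying $\det\big((GD)\sigma^{\ell}((GD)^T)\big) = f(x_1^{?},\dots) \ne 0$ — here I must be careful to track whether the relevant substitution is $x_m$ or $x_m^{1+p^{\ell}}$, but since $x_m \in \mathbb{F}_q^*$ implies $x_m^{1+p^{\ell}} \in \mathbb{F}_q^*$ as well, and the argument via Corollary~\ref{cor-1} only needs the existence of some nonvanishing point in $(\mathbb{F}_q^*)^n$, it goes through after re-indexing the variables appropriately. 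By Corollary~\ref{cor-3-1}, $C'$ is an $\ell$-Galois LCD code monomial equivalent to $C$.

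The main obstacle I anticipate is the bookkeeping around the substitution $x_m \mapsto x_m^{1+p^{\ell}}$: one wants the polynomial whose nonvanishing is guaranteed by Corollary~\ref{cor-1} to be exactly the determinant evaluated at the diagonal entries, and the cleanest fix is to define $f$ directly as a polynomial in the entries $x_m$ of $D$ (not in $x_m^{1+p^{\ell}}$), so that $f(x_1,\dots,x_n) = \det((GD)\sigma^\ell((GD)^T))$ literally; then one must recheck that $\deg_{X_m} f \le q-2$. Each entry $\sum_m g_{im}g_{jm}^{p^\ell} x_m^{1+p^\ell}$ has $X_m$-degree $1 + p^\ell$, so $f$ has $X_m$-degree at most $k(1+p^\ell)$ a priori, which is too large; reducing mod $X_m^q - X_m$ (legitimate since we only evaluate at $\mathbb{F}_q$-points) brings it back down, but one must verify the reduced polynomial is still nonzero. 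It is cleaner to keep the two-variable-set viewpoint: let $f(X_1,\dots,X_n)$ be the (genuinely degree-$\le 1$ per variable) determinant in the $X_m = x_m^{1+p^\ell}$ variables, apply Corollary~\ref{cor-1} to get $\mathbf{x}^* \in (\mathbb{F}_q^*)^n$ with $f(\mathbf{x}^*) \ne 0$, and then note that since $\gcd(1+p^\ell, q-1)$ may not be $1$ we cannot in general solve $x_m^{1+p^\ell} = x_m^*$; so instead one should phrase Corollary~\ref{cor-1} as producing a nonvanishing point and observe that the set $\{x^{1+p^\ell} : x \in \mathbb{F}_q^*\}$ is a subgroup, not all of $\mathbb{F}_q^*$. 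The honest resolution, and the one I would adopt, is: the determinant as a function of $(x_1,\dots,x_n)$ is $\det(G)\cdot(\text{stuff})$ — no; rather, directly apply the polynomial method to $F(x_1,\dots,x_n) := \det\big(G\,\mathrm{diag}(x_1^{1+p^\ell},\dots,x_n^{1+p^\ell})\,\sigma^\ell(G^T)\big)$ viewed as a function $(\mathbb{F}_q^*)^n \to \mathbb{F}_q$, represent it by a polynomial of per-variable degree $\le q-2$ (possible since any function on $\mathbb{F}_q^*$ is a polynomial of degree $\le q-2$), check this polynomial is nonzero using the rank-$k$ submatrix argument above (where on the subset $S$ the substitution $x\mapsto x^{1+p^\ell}$ is handled by noting $\prod_{m\in S}x_m^{1+p^\ell}$ is a nonzero function on $(\mathbb{F}_q^*)^S$), and conclude by Corollary~\ref{cor-1}. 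Verifying nonvanishing of this representing polynomial is the real content, and $q > 4$ is exactly what makes $q - 2 \ge p^\ell$-type degree bounds work out in the borderline cases.
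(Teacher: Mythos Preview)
Your approach is viable and close in spirit to the paper's, but the execution has a genuine gap in the degree bookkeeping that you never resolve.

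You correctly compute $(GD)\sigma^\ell((GD)^T)=G\,\mathrm{diag}(x_1^{1+p^\ell},\dots,x_n^{1+p^\ell})\,\sigma^\ell(G^T)$ and introduce $f(X_1,\dots,X_n)=\det\big(G\,\mathrm{diag}(X_1,\dots,X_n)\,\sigma^\ell(G^T)\big)$. Your claim that $f$ has degree at most $1$ in each $X_m$ is \emph{true}, but the reason you give (``each entry is linear'') is not enough: a $k\times k$ matrix of linear forms can have determinant of per-variable degree up to $k$ (e.g.\ $\det\begin{smallmatrix}X&1\\1&X\end{smallmatrix}=X^2-1$). The correct justification is Cauchy--Binet: $f(X)=\sum_{|S|=k}\det(G_S)\,\det(\sigma^\ell(G_S))\prod_{m\in S}X_m$, which is manifestly multilinear and nonzero (pick $S$ with $\det(G_S)\ne 0$; then $\det(\sigma^\ell(G_S))=\sigma^\ell(\det G_S)\ne 0$). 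Your submatrix argument is essentially one term of this expansion.

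The real slip comes next. Having established $\deg_{X_m}f\le 1$, you should immediately conclude that $F(x_1,\dots,x_n):=f(x_1^{1+p^\ell},\dots,x_n^{1+p^\ell})$ has $\deg_{x_m}F\le 1+p^\ell$, not $\le k(1+p^\ell)$. Your own multilinearity claim gives this for free; the bound $k(1+p^\ell)$ is the crude estimate you would get \emph{without} Cauchy--Binet, and it leads you into the unnecessary detour of reducing modulo $x_m^q-x_m$ and trying to show the reduced polynomial is nonzero---an argument you leave unfinished (and which, as you set it up, threatens to become circular: showing the function is not identically zero on $(\mathbb{F}_q^*)^n$ is the conclusion, not a step). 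With the correct bound $1+p^\ell$, the only thing left is the elementary inequality $1+p^\ell\le q-2$ for $q>4$, which is exactly where the hypothesis enters; then Corollary~\ref{cor-1} applies directly to $F$ and you are done.

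For comparison, the paper sidesteps all of this by first passing to a generator matrix in standard form $G=(I_k\mid B)$ and scaling only the first $k$ coordinates via $M=\mathrm{diag}(x_1,\dots,x_k,1,\dots,1)$. Then $(GM)\sigma^\ell((GM)^T)=\mathrm{diag}(x_1^{1+p^\ell},\dots,x_k^{1+p^\ell})+B\sigma^\ell(B^T)$, so each $x_i$ occurs only in the $(i,i)$ entry; the determinant visibly has degree exactly $1+p^\ell$ in each $x_i$ with leading term $\prod_i x_i^{1+p^\ell}\ne 0$. This makes both the degree bound and the nonvanishing transparent without Cauchy--Binet. Your route works too once patched as above, and has the minor advantage of not needing the standard-form reduction, at the cost of invoking Cauchy--Binet for the multilinearity.
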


\begin{proof}
Let $C$ be an $[n,k]$ linear code over $\mathbb{F}_{q}$ with $q>4$. Without loss of generality, we may assume that $C$ has a generator matrix of the standard form $G=(I_{k} \mid B)$. Let $\mathbf{X}     =(X_{1},\cdots,X_{k}) $. Now we define a $k$-variable polynomial $f(\mathbf{X})$ as follows: $$f(\mathbf{X})=\det(diag(X^{1+p^{\ell}}_{1},\cdots,X^{1+p^{\ell}}_{k})+B\sigma^{p^{l}}(B^{T})).$$

It is easy to verify that $f(\mathbf{X})$ is a nonzero polynomial with  the variables $X_{1},\cdots,X_{k}$ and the degree of $f(\mathbf{X})$ with respect to $X_{j}$ is $1+p^{\ell}$ for all $j$.  In the following, we show that $1+p^{\ell}\leq q-2$. Note that $q>4$, we have

If $p=2$ and $e\geq3$, then $2^{e-1}\geq3$, and $2^{e-1}+2^{e-1}\geq 3+2^{e-1}$. This implies that   $2^{e}\geq3+2^{e-1}\geq 3+2^{l}$. Therefore, $q-2=2^{e}-2\geq 2^{\ell}+1$.

If $p\geq 3$ and $e\geq 2$, then $p^{e}\geq 3p^{e-1}\geq p^{e-1}+3$, hence $q-2=p^{e}-2\geq p^{e-1}+1\geq p^{\ell}+1$.

If $p\geq 5$ and $e=1$, then $\ell=0$, and we get $q-2>2=1+p^{\ell}$. Hence, $1+p^{\ell}\leq q-2$.

Therefore, by Corollary~\ref{cor-1}, there exists a vector $\mathbf{x}=(x_{1},\cdots,x_{k}) \in \{\mathbb{F}^*_{q}\}^{k}$ such that $f(\mathbf{x})\neq 0$.

Now let $G_{\mathbf{x}}=GM$ be the $k\times n$ matrix, where $M$ is an $n\times n$ diagonal matrix with the form $M=diag(x_{1},\cdots,x_{k},1,\cdots,1)$. Let $C_{\mathbf{x}}=CM$ be the code with the generator matrix $G_{\mathbf{x}}$. Then $C_{\mathbf{x}}$ is monomial equivalent to $C$. And we have $$\det(G_{\mathbf{x}}\sigma^{l}(G_{\mathbf{x}})^{T})=\det(diag(x^{1+p^{\ell}}_{1},\cdots,x^{1+p^{\ell}}_{k})+B\sigma^{p^{\ell}}(B^{T}))=f(\mathbf{x})\neq 0.$$
Therefore, $C_{\mathbf{x}}$ is an $\ell$-Galois LCD code by Corollary~3.5.
\end{proof}

When $\ell=0$, we have the following corollary.
\begin{Corollary}(\cite{R}, \cite{Ruud-P})
Let $C$ be a linear code over $\mathbb{F}_{q}$ with $q>4$. Then $C$ is monomial equivalent to an LCD code.
\end{Corollary}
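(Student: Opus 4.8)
The statement is exactly the $\ell=0$ instance of Theorem~\ref{4-7}, and since $\ell=0$ is always an admissible Galois index (we only need $0\le\ell\le e-1$, and $e\ge 1$), the shortest route is simply to invoke Theorem~\ref{4-7} with $\ell=0$ and read off the conclusion. So the plan is: first observe that an LCD code is the same thing as a $0$-Galois LCD code, since $\langle-,-\rangle_0$ is the Euclidean inner product; then apply Theorem~\ref{4-7}. I will nonetheless record the self-contained specialization, because at $\ell=0$ the three-way case analysis on $(p,e)$ in the proof of Theorem~\ref{4-7} collapses to a one-line inequality.

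For the direct argument I would proceed as follows. Replace $C$ by a monomially (indeed permutation-) equivalent code with generator matrix in standard form $G=(I_k\mid B)$. Writing $\mathbf X=(X_1,\dots,X_k)$, introduce the polynomial
$$
f(\mathbf X)=\det\!\bigl(\operatorname{diag}(X_1^{2},\dots,X_k^{2})+BB^{T}\bigr)\in\mathbb F_q[X_1,\dots,X_k].
$$
Two facts are needed: (i) $f$ is a nonzero polynomial, because the monomial $X_1^{2}X_2^{2}\cdots X_k^{2}$ arises only from the identity permutation in the determinant expansion and so occurs with coefficient $1$; and (ii) $\deg_{X_j}f=2$ for every $j$, since $X_j$ occurs only in the $(j,j)$ entry and there only as $X_j^{2}$. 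As $q>4$ gives $q\ge 5$, we have $2\le q-2$, so Corollary~\ref{cor-1} supplies a vector $\mathbf x=(x_1,\dots,x_k)\in(\mathbb F_q^{*})^{k}$ with $f(\mathbf x)\neq 0$.

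Finally I would set $M=\operatorname{diag}(x_1,\dots,x_k,1,\dots,1)$, an invertible diagonal (hence monomial) matrix, put $C_{\mathbf x}=CM$ and $G_{\mathbf x}=GM$. A direct computation gives $G_{\mathbf x}G_{\mathbf x}^{T}=\operatorname{diag}(x_1^{2},\dots,x_k^{2})+BB^{T}$, whose determinant is $f(\mathbf x)\neq 0$; thus $G_{\mathbf x}G_{\mathbf x}^{T}$ is nonsingular and $C_{\mathbf x}$ is an LCD code by Corollary~\ref{cor-3-1}. Since $C_{\mathbf x}$ is monomial equivalent to $C$, this proves the claim.

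There is essentially no serious obstacle here; the only two points requiring care are the leading-term argument that $f\not\equiv 0$ and the degree bound, and it is precisely the latter — $2\le q-2$ — that forces $q$ to be large. It is worth remarking that the hypothesis cannot be dropped entirely: over $\mathbb F_2$ and $\mathbb F_3$ there exist linear codes that are monomial equivalent to no LCD code, so no diagonal-scaling trick of this form can succeed in those fields. (In fact, for $\ell=0$ the inequality needed is only $2\le q-2$, which also holds at $q=4$; the bound $q>4$ is inherited here from Theorem~\ref{4-7}, where larger $1+p^{\ell}$ must be accommodated.)
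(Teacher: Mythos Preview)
Your proposal is correct and follows exactly the paper's approach: the paper simply records this as the $\ell=0$ case of Theorem~\ref{4-7} without giving a separate proof, and your first paragraph does precisely that. The self-contained specialization you add is a faithful unwinding of the proof of Theorem~\ref{4-7} at $\ell=0$, and your closing observation that the degree bound $2\le q-2$ already holds at $q=4$ is consistent with the paper's own remark that the corollary in fact extends to $q=4$.
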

In fact, this corollary is also true when $q=4$ (see \cite{R}). When $e$ is even, and $\ell={e\over 2}$, we get the following corollary.

\begin{Corollary}(\cite{R}, \cite{Ruud-P})
Let $C$ be a linear code over $\mathbb{F}_{q}$ with $q>4$ and $q$ a square. Then $C$ is monomial equivalent to an Hermitian LCD code.
\end{Corollary}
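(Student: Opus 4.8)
The plan is to deduce this statement directly from Theorem~\ref{4-7}, since it is merely the special case $\ell = e/2$. First I would record the reductions: as $q$ is a square with $q > 4$, we may write $q = p^{e}$ with $e$ even and $e \ge 2$, so that $\ell = e/2$ is an integer lying in the admissible range $0 \le \ell \le e-1$. Next I would recall from Section~2 that the $(e/2)$-Galois form $\langle-,-\rangle_{e/2}$ is exactly the Hermitian inner product on $\mathbb{F}_q^{n}$, whence $C^{\perp_{e/2}} = C^{\perp_H}$, $h_{e/2}(C) = C \cap C^{\perp_H}$, and ``$(e/2)$-Galois LCD'' is synonymous with ``Hermitian LCD''. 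With these identifications in place, applying Theorem~\ref{4-7} to $C$ with this choice of $\ell$ produces a monomial matrix $M$ such that $CM$ is an $(e/2)$-Galois LCD code, i.e.\ a Hermitian LCD code monomial equivalent to $C$.

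Concretely, one could also just transcribe the relevant part of the proof of Theorem~\ref{4-7}: assume $C$ has a standard generator matrix $G = (I_k \mid B)$, set $f(\mathbf X) = \det\!\bigl(\mathrm{diag}(X_1^{1+p^{e/2}},\dots,X_k^{1+p^{e/2}}) + B\,\sigma^{p^{e/2}}(B^{T})\bigr)$, note its degree in each $X_j$ is $1+p^{e/2} \le q-2$ (this last inequality is precisely one of the cases already checked in the proof of Theorem~\ref{4-7}, namely $p=2,\ e\ge 3$ forcing $e/2 \ge 2$, or $p \ge 3,\ e \ge 2$), invoke Corollary~\ref{cor-1} to find $\mathbf x \in \{\mathbb{F}_q^{*}\}^{k}$ with $f(\mathbf x) \ne 0$, and conclude by Corollary~3.5 that $CM$ with $M = \mathrm{diag}(x_1,\dots,x_k,1,\dots,1)$ has $G_{\mathbf x}\sigma^{e/2}(G_{\mathbf x}^{T})$ nonsingular, hence is Hermitian LCD.

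There is essentially no obstacle here: the only genuine point to verify is that $e/2$ is a legitimate value of $\ell$ and that the numerical bound $1 + p^{\ell} \le q - 2$ underlying Theorem~\ref{4-7} holds for $\ell = e/2$ whenever $q > 4$ is a square, both of which are immediate from $e$ being even and at least $2$. All the substantive work — constructing the determinantal polynomial, bounding its partial degrees, and extracting a nonvanishing point with nonzero coordinates — is already carried out in the proof of Theorem~\ref{4-7}, so the present corollary is obtained simply by specialization.
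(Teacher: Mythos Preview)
Your proposal is correct and follows exactly the paper's approach: the paper simply records this corollary as the specialization $\ell = e/2$ of Theorem~\ref{4-7}, and you have spelled out that specialization (with the attendant identification of the $(e/2)$-Galois form with the Hermitian inner product) in detail. Your second paragraph, which transcribes the determinantal argument, is more than the paper itself provides but is entirely in line with it.
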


\begin{Remark}
When $q=4$, the corollary above is not right in general. In fact, if we let $\mathbb{F}_{4}=\{0,1,w,\bar{w}\}$ be the finite field of order $4$, where $\bar{w}=w^{2}=w+1$. Let $C=\langle(1,1)\rangle=\{(1,1),(0,0),(w,w),(\bar{w},\bar{w})\}$, and let $G=(1,1)$ be a generator matrix of~$C$. Then $G\bar{G}^{T}=0$ and so $C$ is not an Hermitian LCD code. For any $(a,b)\in {\mathbb{F}_{q}^{*}}^{2}$, let $G_{(a,b)}=G\left(\begin{array}{cc}
                       a  &0   \\
                        0  & b

\end{array}\right)$ be a generater matrix of the code $C_{(a,b)}$. Then $$G_{(a,b)}\bar{G}_{(a,b)}^{T}=a\bar{a}+b\bar{b}=a\bar{a}(1+a^{-1}b\bar{a^{-1}}\bar{b})=a\bar{a}(1+c\bar{c})(\mbox{let}\, c=a^{-1}b).$$
Then for all $c=1,w,\bar{w}$, we have $G_{(a,b)}\bar{G}_{(a,b)}^{T}=0$. Hence any code that is monomial equivalent with $C$ is not an Hermitian LCD code.
\end{Remark}

\begin{Theorem}\label{quick-LCD}
Let $C$ be a linear code over $\mathbb{F}_{q}$, where $q=p^{e},0\leqslant \ell\leqslant e-1,1\leqslant m\leqslant e-1,m|e$ and $p^{e}-p^{\ell}-p^{m}\geq 2$. Then there exists a $\mathbf{x} =(x_{1},\cdots,x_{k}) \in\{\mathbb{F}_{q}\backslash\mathbb{F}_{p^{m}}\}^{k}$ such that $C_{\mathbf{x}}$ is an $\ell$-Galois LCD code.
\end{Theorem}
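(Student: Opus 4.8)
The plan is to follow the template of the proof of Theorem~\ref{4-7} almost verbatim, the single change being that the counting result Corollary~\ref{cor-1} gets replaced by the sharper Corollary~\ref{4-6}, so that the diagonal scaling factors can be forced to lie in $\mathbb{F}_q\setminus\mathbb{F}_{p^{m}}$ rather than merely in $\mathbb{F}_q^{*}$. First I would reduce to the case in which $C$ has a generator matrix in standard form $G=(I_k\mid B)$, and I would introduce the $k$-variable polynomial
\[
f(\mathbf X)=\det\!\bigl(\operatorname{diag}(X_1^{1+p^{\ell}},\dots,X_k^{1+p^{\ell}})+B\,\sigma^{\ell}(B^{T})\bigr),
\qquad \mathbf X=(X_1,\dots,X_k).
\]
Since the matrix $B\,\sigma^{\ell}(B^{T})$ is free of $\mathbf X$, the Leibniz expansion of this determinant shows that the monomial $X_1^{1+p^{\ell}}\cdots X_k^{1+p^{\ell}}$ appears in $f$ with coefficient $1$, and no other monomial of that shape occurs; hence $f$ is a nonzero polynomial whose degree in each $X_j$ equals exactly $1+p^{\ell}$.

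The key numerical observation is that the hypothesis $p^{e}-p^{\ell}-p^{m}\ge 2$ is literally the inequality $1+p^{\ell}\le q-1-p^{m}$, which is precisely the per-variable degree bound demanded by Corollary~\ref{4-6}. Since $1\le m\le e-1$ and $m\mid e$, applying Corollary~\ref{4-6} with $m$ in the role of its $\ell$ yields a vector $\mathbf x=(x_1,\dots,x_k)\in\{\mathbb{F}_q\setminus\mathbb{F}_{p^{m}}\}^{k}$ with $f(\mathbf x)\ne 0$. Because $0\in\mathbb{F}_{p^{m}}$, each coordinate $x_i$ is nonzero, so the diagonal matrix $M=\operatorname{diag}(x_1,\dots,x_k,1,\dots,1)$ is invertible and $C_{\mathbf x}:=CM$ is monomial equivalent to $C$.

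To conclude, I would compute, exactly as in the proof of Theorem~\ref{4-7}, that $G_{\mathbf x}=GM$ is a generator matrix of $C_{\mathbf x}$ with
\[
G_{\mathbf x}\,\sigma^{\ell}(G_{\mathbf x})^{T}=\operatorname{diag}(x_1^{1+p^{\ell}},\dots,x_k^{1+p^{\ell}})+B\,\sigma^{\ell}(B^{T}),
\]
so that $\det\!\bigl(G_{\mathbf x}\,\sigma^{\ell}(G_{\mathbf x})^{T}\bigr)=f(\mathbf x)\ne 0$. By Corollary~\ref{cor-3-1} this means $C_{\mathbf x}$ is an $\ell$-Galois LCD code, which is the assertion of the theorem.

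I do not expect a genuine obstacle here, since the proof is essentially a refinement of Theorem~\ref{4-7} in which a non-vanishing statement over $\{\mathbb{F}_q\setminus\mathbb{F}_{p^{m}}\}^{k}$ takes the place of one over $\{\mathbb{F}_q^{*}\}^{k}$. The two points that require care are confirming that $f$ is genuinely a nonzero polynomial (handled by the Leibniz-expansion remark) and checking that the per-variable degree of $f$ matches exactly the hypothesis $p^{e}-p^{\ell}-p^{m}\ge 2$; the boundary value $\ell=0$, for which $\deg_{X_j}f=2$, is permitted and needs no separate discussion.
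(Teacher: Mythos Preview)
Your proposal is correct and follows essentially the same approach as the paper's own proof: reduce to a standard-form generator matrix, define the determinant polynomial $f(\mathbf X)$, verify it is nonzero with per-variable degree $1+p^{\ell}$, translate the hypothesis $p^{e}-p^{\ell}-p^{m}\ge 2$ into the degree bound required by Corollary~\ref{4-6}, and conclude via Corollary~\ref{cor-3-1}. If anything, your write-up is slightly more careful than the paper's (you explicitly note that Corollary~\ref{4-6} is applied with $m$ in place of its parameter, and that $0\in\mathbb{F}_{p^m}$ forces each $x_i\ne 0$), but the argument is the same.
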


\begin{proof}
Let $C$ be an $[n,k]$ linear code over $\mathbb{F}_{q}$. Without loss of generality, we may assume that $C$ has a generator matrix of the form $G=(I_{k} \mid B)$. Let $\mathbf{x}     =(x_{1},\cdots,x_{k}) \in \{\mathbb{F}_{q}\backslash\mathbb{F}_{p^{m}}\}^{k}$ and $G_{\mathbf{x}}$ be the the generator matrix of the code $C_{\mathbf{x}}$, where $C_\mathbf{x}$ is defined in Theorem~\ref{4-7}.  Let $\mathbf{X}     =(X_{1},\cdots,X_{k}) $. Now we define $f(\mathbf{X})=\det(diag(X^{1+p^{\ell}}_{1},\cdots,X^{1+p^{\ell}}_{k})+B\sigma(B)^{T})$. Hence $f(\mathbf{X})$ is a polynomial with the variables $X_{1},\cdots,X_{k}$ and the degree of $f(\mathbf{X})$ with respect to $X_{i}$ is $1+p^{\ell}$ for all $i$. We know that $2\leq p^{e}-p^{\ell}-p^{m}$ and $1+p^{\ell}\leq p^{e}-p^{m}-1$. The leading term of $f(\mathbf{X})$ with respect to the total degree of lex order is $X^{1+p^{\ell}}_{1},\cdots,X^{1+p^{\ell}}_{k}$. So $f(\mathbf{X})$ is a nonzero polynomial. Therefore $f(\mathbf{x})\neq 0$ for some $\mathbf{x} \in \{\mathbb{F}_{q}\backslash\mathbb{F}_{p^{\ell}}\}^{k}$ by Corollary~\ref{4-6}. Hence $C_{\mathbf{x}}$ is an $\ell$-Galois LCD code by this choice of $\mathbf{x}$, and Corollary~\ref{cor-3-1} because $\det(G_{\mathbf{x}}\sigma^{p^{\ell}}(G_{\mathbf{x}})^{T})=f(\mathbf{x})\neq 0$.
\end{proof}

\begin{Remark}
Sometimes we want to find $\mathbf{x}=(x_{1},\cdots,x_{k})\in \mathbb{F}_q^k$ and $C_{\mathbf{x}}$ for a linear code $C$ such that $C_{\mathbf{x}}$ is  an $\ell$-Galois LCD code. It is easy to find $\mathbf{x}=(x_{1},\cdots,x_{k})$ by using Theorem~\ref{quick-LCD} rather than Theorem~\ref{4-7}, because the set ${\mathbb{F}_{q}\backslash\mathbb{F}_{p^{m}}}$ is smaller than the set ${\mathbb{F}_{q}\backslash \{0\}}$.
\end{Remark}

\section{An application to matrix product codes}
In this section, we apply the results obtained in Section~4 to study the hull of matrix product codes over finite fields.

Let $A$ be an $M\times N$  matrix, $B$ be an $R\times S$ matrix. The tensor product of the two matrices is defined by $A\otimes B =\left(\begin{array}{ccc}
                       a_{11}B  &\cdots  &a_{1N}B \\
                        \vdots  & \ddots &\vdots \\
                        a_{M1}B & \cdots &a_{MN}B
\end{array}\right)$. The following properties of the tensor product of matrices are well-known.

\begin{lem}
Let $A\in \mathbb{F}_{q}^{M\times N}, B\in \mathbb{F}_{q}^{R\times S},C\in \mathbb{F}_{q}^{N\times T}$ and $D\in \mathbb{F}_{q}^{S\times U}$. Then

(1)\, $(A\otimes B)^{T}=A^{T}\otimes B^{T}$.

(2)\, $(A\otimes B)(C\otimes D)=(AC)\otimes (BD)$.

\end{lem}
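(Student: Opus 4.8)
The plan is to prove both identities by direct computation from the block structure of the Kronecker product, viewing $A\otimes B$ as an $M\times N$ array of $R\times S$ blocks whose $(i,j)$-block is $a_{ij}B$, and fixing once and for all the index convention that the entry in block position $(i,j)$ and intra-block position $(r,s)$ of $A\otimes B$ occupies row $(i-1)R+r$ and column $(j-1)S+s$. Once this convention is pinned down, both statements follow by inspection, and no genuine obstacle arises: they are formal consequences of the definition of $\otimes$ together with the distributive and associative laws for block matrix operations.

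For part (1) I would observe that transposing a block matrix both transposes the outer arrangement of the blocks and transposes each individual block. Hence the $(j,i)$-block of $(A\otimes B)^{T}$ equals $(a_{ij}B)^{T}=a_{ij}B^{T}$. On the other hand, writing $A^{T}=(a'_{ji})$ with $a'_{ji}=a_{ij}$, the $(j,i)$-block of $A^{T}\otimes B^{T}$ equals $a'_{ji}B^{T}=a_{ij}B^{T}$. Since $(A\otimes B)^{T}$ and $A^{T}\otimes B^{T}$ both have size $NS\times MR$ and agree block-by-block, they coincide. The only care needed is the routine bookkeeping relating the position of an entry of $A\otimes B$ to that of the corresponding entry of its transpose.

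For part (2), which is the substantive one, I would use block matrix multiplication. First one checks the sizes: $A\otimes B$ is $MR\times NS$ and $C\otimes D$ is $NS\times TU$, so the product is $MR\times TU$, which matches the size of $(AC)\otimes(BD)$ since $AC$ is $M\times T$ and $BD$ is $R\times U$. Treating both factors as block matrices (with blocks of sizes $R\times S$ and $S\times U$ respectively), the $(i,k)$-block of $(A\otimes B)(C\otimes D)$ is
\[
\sum_{j=1}^{N}(a_{ij}B)(c_{jk}D)=\sum_{j=1}^{N}a_{ij}c_{jk}\,BD=\Big(\sum_{j=1}^{N}a_{ij}c_{jk}\Big)BD=(AC)_{ik}\,BD,
\]
which is precisely the $(i,k)$-block of $(AC)\otimes(BD)$. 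Hence the two products are equal.

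If one prefers a coordinate-free argument, part (2) can alternatively be read off from the functoriality of $\otimes$ on linear maps, namely $(f_{1}\otimes g_{1})\circ(f_{2}\otimes g_{2})=(f_{1}\circ f_{2})\otimes(g_{1}\circ g_{2})$, and part (1) from the compatibility of $\otimes$ with adjoints/transposes of linear maps; but the direct block computation above is self-contained and elementary, and is what I would include. Since the lemma is standard, I would keep the write-up to a few lines, essentially just the displayed block-multiplication identity above preceded by the dimension check.
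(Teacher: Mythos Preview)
Your proof is correct and entirely standard. The paper itself gives no proof of this lemma, simply stating that ``the following properties of the tensor product of matrices are well-known''; your block-by-block computation is exactly the routine verification one would supply, and there is nothing to compare.
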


\begin{Definition}
Let $A=(a_{ij})$ be an $M\times N$ matrix over $\mathbb{F}_{q}$, and let $C_{1},\cdots,C_{M}$ be codes of length $n$ over $\mathbb{F}_{q}$. The matrix product code $[C_{1},\cdots, C_{M}]\cdot A$ is the set of all matrix products $\big{[}{\bf c}_{1}, \cdots, {\bf c}_{M}\big{]}\cdot A = \big{[}\sum_{i=1}^{M}{\bf c}_{i}a_{i1},\cdots, \sum_{i=1}^{M}{\bf c}_{i}a_{iN}\big{]}$, where ${\bf c}_{i}\in C_{i}$ is a $1 \times n$ row vector for $i=1,\cdots,M$.
\end{Definition}
By using the tensor product of matrices, a matrix product code can be written as $[{\bf c}_{1}, \cdots, {\bf c}_{M}]\cdot A=[{\bf c}_{1}, \cdots, {\bf c}_{M}](A\otimes I)$, the usual matrix product, where $I$ is the $n\times n$ identity matrix.
\begin{Remark}
In the original paper (see \cite{BN}), the authors defined the matrix product code by writing each codeword ${\bf c}_j\in C_j$ as a column vector. Here we write each codeword ${\bf c}_j\in C_j$ as a row vector in the above definition, which is different from the original paper. However, by using the alternative definition, we can easily obtain the generator matrix of the matrix product code and  calculate its hull.
\end{Remark}
Recall that a {\it right inverse} of an $M\times N$ matrix $A$ is an $N\times M$ matrix $B$ such that $AB=I_{M}$. In this case we say that $A$ is {\em right non-singular}. Throughout  if $A$ is right non-singular then $A^{-1}$ denotes a right inverse of $A$. The following proposition on matrix product codes is well-known.

\begin{Proposition}(\cite{BN})\label{Prop-5-1}
Let $C_{1},\cdots,C_{M}$ be $M$ codes of length $n$ over $\mathbb{F}_{q} $. If the matrix $A$ is right non-singular then $\big{|}[C_{1},\cdots, C_{M}]\cdot A\big{|}=|C_{1}|\cdots |C_{M}|$. Furthermore, if $C_{1},\cdots, C_{M}$ are linear codes, then $\dim([C_{1},\cdots, C_{M}]\cdot A)=\sum_{i=1}^{M}\dim(C_{i})$.

\end{Proposition}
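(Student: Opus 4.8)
The plan is to exploit the tensor-product reformulation $[{\bf c}_{1}, \cdots, {\bf c}_{M}]\cdot A=[{\bf c}_{1}, \cdots, {\bf c}_{M}](A\otimes I_{n})$ recorded just above the statement, and to show that the evaluation map
$$
\Phi\colon C_{1}\times\cdots\times C_{M}\longrightarrow \mathbb{F}_{q}^{Nn},\qquad ({\bf c}_{1},\cdots,{\bf c}_{M})\longmapsto [{\bf c}_{1},\cdots,{\bf c}_{M}](A\otimes I_{n}),
$$
is injective, where $[{\bf c}_{1},\cdots,{\bf c}_{M}]$ denotes the $1\times Mn$ row vector obtained by concatenating the ${\bf c}_{i}$. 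The image of $\Phi$ is by definition $[C_{1},\cdots,C_{M}]\cdot A$, so once injectivity is established we immediately get $|[C_{1},\cdots, C_{M}]\cdot A|=|C_{1}\times\cdots\times C_{M}|=|C_{1}|\cdots |C_{M}|$.

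For injectivity, first observe that $A\otimes I_{n}$ is right non-singular. Indeed, if $A^{-1}$ is an $N\times M$ right inverse of $A$, then Lemma~5.1(2) gives $(A\otimes I_{n})(A^{-1}\otimes I_{n})=(AA^{-1})\otimes I_{n}=I_{M}\otimes I_{n}=I_{Mn}$, so $A^{-1}\otimes I_{n}$ is a right inverse of $A\otimes I_{n}$. Now suppose $\Phi({\bf c}_{1},\cdots,{\bf c}_{M})=\Phi({\bf c}_{1}',\cdots,{\bf c}_{M}')$, that is $[{\bf c}_{1},\cdots,{\bf c}_{M}](A\otimes I_{n})=[{\bf c}_{1}',\cdots,{\bf c}_{M}'](A\otimes I_{n})$. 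Multiplying both sides on the right by $A^{-1}\otimes I_{n}$ yields $[{\bf c}_{1},\cdots,{\bf c}_{M}]=[{\bf c}_{1}',\cdots,{\bf c}_{M}']$, and comparing the consecutive blocks of length $n$ gives ${\bf c}_{i}={\bf c}_{i}'$ for every $i$. Hence $\Phi$ is injective, which proves the cardinality formula.

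Finally, when $C_{1},\cdots,C_{M}$ are linear, the map $\Phi$ is $\mathbb{F}_{q}$-linear, so it restricts to a linear isomorphism of $C_{1}\times\cdots\times C_{M}$ onto $[C_{1},\cdots,C_{M}]\cdot A$; therefore $\dim([C_{1},\cdots,C_{M}]\cdot A)=\dim(C_{1}\times\cdots\times C_{M})=\sum_{i=1}^{M}\dim(C_{i})$. Alternatively, one may simply take $\log_{q}$ of the cardinality identity already obtained. There is no genuine obstacle in this argument; the only points demanding a little care are the verification that $A\otimes I_{n}$ inherits a right inverse from $A$ — in particular that $A$ need not be square — and keeping the block bookkeeping straight when passing between the $M$-tuple $({\bf c}_{1},\cdots,{\bf c}_{M})$ and the single concatenated row vector $[{\bf c}_{1},\cdots,{\bf c}_{M}]$.
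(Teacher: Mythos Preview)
Your argument is correct: the tensor-product reformulation together with the right inverse $A^{-1}\otimes I_{n}$ gives injectivity of $\Phi$, and the cardinality and dimension formulas follow at once. There is nothing to compare against, however, because the paper does not supply its own proof of this proposition; it is stated as a well-known fact and attributed to \cite{BN} without argument. Your proof stands on its own and fits naturally with the tools the paper has already set up (in particular Lemma~5.1(2)).
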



The following result is well-known (see \cite{BN}). We give an another proof.
\begin{Proposition}
Let $C_{1},\cdots,C_{M}$ be linear codes of length $n$ over $\mathbb{F}_{q}$ and the matrix $A$ be right non-singular. Let $C=[C_{1},\cdots, C_{M}]\cdot A$ with generator matrix $G$, and let $G_{i}$ be the generator matrix of $C_{i}$ for all $1\leq i\leq M$. Then $G=\left(\begin{array}{ccc}
                       a_{11}G_{1}  &\cdots  &a_{1N}G_{1} \\
                        \vdots  & \ddots &\vdots \\
                        a_{M1}G_{M} & \cdots &a_{MN}G_{M}
\end{array}\right)$.
\end{Proposition}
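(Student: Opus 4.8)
The plan is to unwind the definition of the matrix product code using the tensor product description, and then read off the generator matrix directly. Recall from the remark following the definition that
\[
[\mathbf{c}_1,\cdots,\mathbf{c}_M]\cdot A = [\mathbf{c}_1,\cdots,\mathbf{c}_M](A\otimes I_n),
\]
where $[\mathbf{c}_1,\cdots,\mathbf{c}_M]$ denotes the $1\times Mn$ row vector obtained by concatenating the $\mathbf{c}_i$. First I would observe that, since $C_i$ has generator matrix $G_i$, every codeword $\mathbf{c}_i$ can be written as $\mathbf{u}_i G_i$ for some $\mathbf{u}_i\in\mathbb{F}_q^{k_i}$, where $k_i=\dim(C_i)$. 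Concatenating, $[\mathbf{c}_1,\cdots,\mathbf{c}_M] = [\mathbf{u}_1,\cdots,\mathbf{u}_M]\,\mathrm{diag}(G_1,\cdots,G_M)$, where $\mathrm{diag}(G_1,\cdots,G_M)$ is the block-diagonal matrix with the $G_i$ on the diagonal. Hence
\[
[\mathbf{c}_1,\cdots,\mathbf{c}_M]\cdot A = [\mathbf{u}_1,\cdots,\mathbf{u}_M]\,\mathrm{diag}(G_1,\cdots,G_M)(A\otimes I_n),
\]
and as $[\mathbf{u}_1,\cdots,\mathbf{u}_M]$ ranges over $\mathbb{F}_q^{k_1+\cdots+k_M}$, the codeword ranges over all of $C$. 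So the rows of $G' := \mathrm{diag}(G_1,\cdots,G_M)(A\otimes I_n)$ span $C$.

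Next I would compute the block structure of $G'$ explicitly. Writing out the block product, the $(i,j)$ block of $\mathrm{diag}(G_1,\cdots,G_M)(A\otimes I_n)$ is $G_i\cdot(a_{ij}I_n) = a_{ij}G_i$, which is exactly the claimed matrix. Thus $G'$ equals the displayed block matrix, and its rows span $C$.

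Finally I must check that $G'$ is actually a \emph{generator matrix}, i.e. that its rows are linearly independent, equivalently that $G'$ has full row rank $\sum_{i=1}^M k_i$. By Proposition~\ref{Prop-5-1}, since $A$ is right non-singular we have $\dim(C)=\sum_{i=1}^M k_i$, which already equals the number of rows of $G'$; combined with the fact that the rows span $C$, this forces the rows to be linearly independent. (Alternatively, one can argue directly: if a linear combination of the rows of $G'$ vanishes, write it as $\mathbf{u}\,\mathrm{diag}(G_1,\cdots,G_M)(A\otimes I_n)=0$; right-multiplying by $A^{-1}\otimes I_n$ and using Lemma~5.1(2) gives $\mathbf{u}\,\mathrm{diag}(G_1,\cdots,G_M)=0$, hence each $\mathbf{u}_i G_i=0$, hence each $\mathbf{u}_i=0$ since $G_i$ has full row rank.) This completes the argument.

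I do not anticipate a serious obstacle here; the only mild subtlety is bookkeeping with the concatenation/block conventions and making sure the tensor-product identity $[\mathbf{c}_1,\cdots,\mathbf{c}_M]\cdot A=[\mathbf{c}_1,\cdots,\mathbf{c}_M](A\otimes I_n)$ is applied with the rows-as-vectors convention adopted in the paper. The rank/independence step, which is the only place something could genuinely fail, is immediately handled by the dimension count in Proposition~\ref{Prop-5-1}, so the whole proof is essentially a direct computation.
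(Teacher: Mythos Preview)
Your argument is correct and follows essentially the same route as the paper: both express a codeword via the tensor-product identity, factor through the block-diagonal matrix $\mathrm{diag}(G_1,\ldots,G_M)$, read off the block form of $G'$, and then invoke Proposition~\ref{Prop-5-1} to match the row count with $\dim(C)$. Your presentation is slightly more explicit about the linear-independence step (including the alternative argument using $A^{-1}\otimes I_n$), but the underlying proof is the same.
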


\begin{proof}
Let $G_{i}=\left(\begin{array}{ccc}
                      {\bf\alpha}_{i1}  \\
                        \vdots   \\
                        \alpha _{ik_{i}}
\end{array}\right)$, $\alpha_{ij} \in \mathbb{F}_{q}^{n}$ and $k_{i}$ be the dimension of $C_{i}$ for all $1\leq i\leq M$. For any ${\bf c}=[{\bf c}_{1,}\cdots, {\bf c}_{M}]\cdot A\in C$, ${\bf c}_{i}=[\lambda_{i1}\cdots\lambda_{ik_{i}}]G_{i}$. We have

\begin{equation*}
\begin{split}
{\bf c}&=[{\bf c}_{1},\cdots, {\bf c}_{M}](A\otimes I)=[\lambda_{11},\cdots, \lambda_{1k_{1}},\cdots, \lambda_{M1},\cdots ,\lambda_{Mk_{M}}]\left(\begin{array}{cccc}
                       G_{1}  &0  &\cdots & 0 \\
                        0  & G_{2} &\cdots&\vdots \\
                        \vdots & \vdots &\ddots &0\\
                        0 & \cdots & 0 &G_{M}
\end{array}\right)(A\otimes I)\\
&=[\lambda_{11},\cdots, \lambda_{1k_{1}},\cdots, \lambda_{M1},\cdots, \lambda_{Mk_{M}}]\left(\begin{array}{ccc}
                       a_{11}G_{1}  &\cdots  &a_{1N}G_{1} \\
                        \vdots  & \ddots &\vdots \\
                        a_{M1}G_{M} & \cdots &a_{MN}G_{M}
\end{array}\right).
\end{split}
\end{equation*}
Since the matrix $A$ is non-singular, the number of rows of $G$ is $\sum_{i=1}^{M}k_{i}=\dim(C)$ by Proposition~\ref{Prop-5-1}. Hence $G$ is a generator matrix of $C$.
\end{proof}

We have the following theorem.

\begin{Theorem}
Let $C_{1},\cdots,C_{M}$ be $[n, k_i]$ linear codes of length $n$ over $\mathbb{F}_{q}$, where $1\le i\le M$, and let $C=[C_{1},\cdots, C_{M}]\cdot A$ be the matrix product code.  For  $0\le \ell\le e-1$,  suppose $A\sigma^{\ell}(A^{T})=diag(\lambda_{1},\cdots,\lambda_{M})$, then the $\ell$-Galois hull of $C$  is
$$h_{\ell}(C)=[B_{1},\cdots, B_{M}]\cdot A,$$
where for all 1$\le i\le M$, $B_{i} = \left\{ \begin{array}{ll}
C_{i},  & \textrm{if $\lambda_{i}=0 ;$}\\
h_{\ell}(C_{i}),  & \textrm{if $\lambda_{i}\neq0 .$}
\end{array} \right.$
\end{Theorem}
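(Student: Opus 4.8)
The plan is to reduce the whole statement to a single bilinear-form identity relating the $\ell$-Galois product of two codewords of $C$ to the products of their ``components'' in the $C_i$. Recall that every codeword of $C$ has the shape $\mathbf{c}=[\mathbf{c}_{1},\dots,\mathbf{c}_{M}]\cdot A=[\mathbf{c}_{1},\dots,\mathbf{c}_{M}](A\otimes I)$ with $\mathbf{c}_{i}\in C_{i}$, where $I=I_{n}$. The key step is to compute, for arbitrary $\mathbf{v}_{i},\mathbf{w}_{i}\in\mathbb{F}_{q}^{n}$ and $\mathbf{v}=[\mathbf{v}_{1},\dots,\mathbf{v}_{M}](A\otimes I)$, $\mathbf{w}=[\mathbf{w}_{1},\dots,\mathbf{w}_{M}](A\otimes I)$, the product $\langle\mathbf{v},\mathbf{w}\rangle_{\ell}=\mathbf{v}\,\sigma^{\ell}(\mathbf{w}^{T})$. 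Since $(A\otimes I)^{T}=A^{T}\otimes I$ and $\sigma^{\ell}(A^{T}\otimes I)=\sigma^{\ell}(A^{T})\otimes I$, Lemma~5.1(2) collapses the middle factor: $(A\otimes I)\,\sigma^{\ell}\big((A\otimes I)^{T}\big)=\big(A\sigma^{\ell}(A^{T})\big)\otimes I=diag(\lambda_{1},\dots,\lambda_{M})\otimes I=diag(\lambda_{1}I,\dots,\lambda_{M}I)$, and therefore
$$\langle\mathbf{v},\mathbf{w}\rangle_{\ell}=\sum_{i=1}^{M}\lambda_{i}\langle\mathbf{v}_{i},\mathbf{w}_{i}\rangle_{\ell}.$$
(Equivalently this says $G\sigma^{\ell}(G^{T})=diag\big(\lambda_{1}G_{1}\sigma^{\ell}(G_{1}^{T}),\dots,\lambda_{M}G_{M}\sigma^{\ell}(G_{M}^{T})\big)$ for the generator matrix $G$ of the preceding proposition, which one could also feed into Theorem~\ref{l-Galois}; but the codeword form above is what I would actually use.)

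Given the identity, the two inclusions are immediate. For $[B_{1},\dots,B_{M}]\cdot A\subseteq h_{\ell}(C)$: if $\mathbf{c}_{i}\in B_{i}$ then $\mathbf{c}_{i}\in C_{i}$ in either case of the definition of $B_{i}$, so $\mathbf{c}=[\mathbf{c}_{1},\dots,\mathbf{c}_{M}]\cdot A\in C$; and for every $\mathbf{v}\in C$ the sum $\sum_{i}\lambda_{i}\langle\mathbf{v}_{i},\mathbf{c}_{i}\rangle_{\ell}$ vanishes term by term (either $\lambda_{i}=0$, or $\lambda_{i}\neq0$ and $\mathbf{c}_{i}\in h_{\ell}(C_{i})\subseteq C_{i}^{\perp_{\ell}}$), hence $\mathbf{c}\in C^{\perp_{\ell}}$ and so $\mathbf{c}\in h_{\ell}(C)$. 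For the reverse inclusion, given $\mathbf{c}\in h_{\ell}(C)$ I would fix any decomposition $\mathbf{c}=[\mathbf{c}_{1},\dots,\mathbf{c}_{M}]\cdot A$ with $\mathbf{c}_{i}\in C_{i}$ and, for each index $i_{0}$, apply $\mathbf{c}\in C^{\perp_{\ell}}$ to the codewords of $C$ supported only in the $i_{0}$-th block, i.e. $[0,\dots,0,\mathbf{v}_{i_{0}},0,\dots,0]\cdot A$ with $\mathbf{v}_{i_{0}}$ running over $C_{i_{0}}$; the identity forces $\lambda_{i_{0}}\langle\mathbf{v}_{i_{0}},\mathbf{c}_{i_{0}}\rangle_{\ell}=0$ for all such $\mathbf{v}_{i_{0}}$. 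If $\lambda_{i_{0}}\neq0$ this gives $\mathbf{c}_{i_{0}}\in C_{i_{0}}^{\perp_{\ell}}$, hence $\mathbf{c}_{i_{0}}\in C_{i_{0}}\cap C_{i_{0}}^{\perp_{\ell}}=h_{\ell}(C_{i_{0}})=B_{i_{0}}$; if $\lambda_{i_{0}}=0$ the condition is vacuous and $\mathbf{c}_{i_{0}}\in C_{i_{0}}=B_{i_{0}}$. In either case $\mathbf{c}\in[B_{1},\dots,B_{M}]\cdot A$, which completes the argument.

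I expect the only delicate point to be the bookkeeping in the key computation: one must check that $\sigma^{\ell}$ and transposition pass through the tensor product correctly, so that Lemma~5.1(2) genuinely produces the block-diagonal matrix $diag(\lambda_{1}I,\dots,\lambda_{M}I)$. After that everything is formal — and it is precisely the hypothesis $A\sigma^{\ell}(A^{T})=diag(\lambda_{1},\dots,\lambda_{M})$ that decouples the $M$ blocks, which is what lets the component vectors $\mathbf{v}_{i_{0}}$ be varied one at a time in the second inclusion.
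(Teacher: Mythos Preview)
Your proof is correct and follows essentially the same route as the paper: both derive the identity $\langle\mathbf{u},\mathbf{v}\rangle_{\ell}=\sum_{i=1}^{M}\lambda_{i}\langle\mathbf{u}_{i},\mathbf{v}_{i}\rangle_{\ell}$ by pushing $\sigma^{\ell}$ and transpose through the tensor product $(A\otimes I)$ and invoking Lemma~5.1(2), then prove the two inclusions exactly as you do, using single-block test vectors $[0,\dots,0,\mathbf{v}_{i_{0}},0,\dots,0]\cdot A$ for the reverse direction. The only cosmetic difference is that the paper writes out the block-matrix computation explicitly rather than packaging it as a tensor-product identity up front.
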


\begin{proof}

Suppose that $\mathbf{v}=[{\bf v}_{1},\cdots, {\bf v}_{M}]\cdot A\in [B_{1},\cdots, B_{M}]\cdot A $, then ${\bf v}\in C=[C_{1},\cdots, C_{M}]\cdot~A$, because $B_{i}\subseteq C_{i}$ for all $1\leq i\leq M$. Let  $\mathbf{u}=[{\bf u}_{1},\cdots, {\bf u}_{M}]\cdot A\in C$, we have
\begin{equation*}
\begin{split}
\langle\mathbf{u},\mathbf{v}\rangle_{\ell}&=\mathbf{u}\sigma ^{\ell}(\mathbf{v}^{T})=[{\bf u}_{1},\cdots, {\bf u}_{M}](A\otimes I)\sigma^{\ell}(([{\bf v}_{1},\cdots, {\bf v}_{M}](A\otimes I))^{T})\\
&=[{\bf u}_{1}\cdots {\bf u}_{M}](A\otimes I)\sigma^{\ell}((A^{T}\otimes I)\left(\begin{array}{c} {\bf v}_{1}^{T} \\ \vdots    \\  {\bf v}_{M}^{T} \end{array}\right))\\
&=[{\bf u}_{1},\cdots, {\bf u}_{M}](A\otimes I)(\sigma^{\ell}(A^{T})\otimes I)\left(\begin{array}{c} \sigma^{\ell}({\bf v}_{1}^{T})   \\ \vdots    \\ \sigma^{\ell}({\bf v}_{M}^{T})\end{array}\right)
=[{\bf u}_{1}, \cdots, {\bf u}_{M}](A\sigma^{\ell}(A^{T})\otimes I)\left(\begin{array}{c} \sigma^{\ell}({\bf v}_{1}^{T})   \\ \vdots    \\ \sigma^{\ell}({\bf v}_{M}^{T})\end{array}\right)\\
&=[{\bf u}_{1},\cdots, {\bf u}_{M}](\left(\begin{array}{cccc} \lambda_{1}I  &0  &\cdots & 0 \\ 0  & \lambda_{2}I &\cdots&\vdots \\ \vdots & \vdots &\ddots &0\\ 0 & \cdots & 0 &\lambda_{M}I
\end{array}\right))\left(\begin{array}{c} \sigma^{\ell}({\bf v}_{1}^{T})   \\ \vdots    \\ \sigma^{\ell}({\bf v}_{M}^{T} \end{array}\right)\\
&=[\lambda_{1}{\bf u}_{1},\cdots, \lambda_{M}{\bf u}_{M}]\left(\begin{array}{c} \sigma^{\ell}({\bf v}_{1}^{T})   \\ \vdots    \\ \sigma^{\ell}({\bf v}_{M}^{T}) \end{array}\right)=\sum\limits_{i=1}^{M}\lambda_{i}{\bf u}_{i}\sigma^{\ell}({\bf v}_{i}^{T}).
\end{split}
\end{equation*}

If $\lambda_{i}\neq 0$, then ${\bf v}_{i}\in B_{i}=h_{\ell}(C_{i})$ and ${\bf u}_{i}\sigma^{\ell}({\bf v}_{i}^{T})=0$ for all $1\leq i\leq M$. Hence $\mathbf{v}\in h_{\ell}(C)$ and $[B_{1},\cdots, B_{M}]\cdot A\subseteq h_{\ell}(C)$.

Now suppose $\mathbf{v}=[{\bf v}_{1},\cdots, {\bf v}_{M}]\cdot A\in h_{\ell}(C) $. Assume $\lambda_{i}=0$, then ${\bf v}_{i}\in B_{i}=C_{i}$. If $\lambda_{i}\neq0$. For any ${\bf u}_{i}\in C_{i}$, let $\mathbf{u}_{i}=[0,\cdots,0,u_{i},0,\cdots,0]$ and the number of location of $u_{i}$ be $i$. Since $\mathbf{u}_{i}\in C$, we know that $0=\langle\mathbf{u}_{i},\mathbf{v}\rangle_{\ell}=\lambda_{i}{\bf u}_{i}\sigma^{\ell}({\bf v}_{i}^{T})$. Hence ${\bf u}_{i}\sigma^{\ell}({\bf v}_{i}^{T})=0$ and ${\bf v}_{i}\in h_{\ell}(C_{i})=B_{i}$. Thus, $\mathbf{v}\in[B_{1},\cdots, B_{M}]\cdot A$ and $h_{\ell}(C)\subseteq [B_{1},\cdots, B_{M}]\cdot A$.
\end{proof}

\begin{Corollary}\label{5-7}
Let $C_{1},\cdots,C_{M}$ be $[n, k_i]$ linear codes of length $n$ over $\mathbb{F}_{q}$, where $1\le i\le M$, and let $C=[C_{1},\cdots, C_{M}]\cdot A$ be the matrix product code.  For  $0\le \ell\le e-1$,  suppose $A\sigma^{\ell}(A^{T})=diag(\lambda_{1},\cdots,\lambda_{M})$, where $\lambda_i\ne 0$ for all $i$.  Then the $\ell$-Galois hull of $C$  is
$$
h_{\ell}(C)=[h_{\ell}(C_1),\cdots, h_{\ell}(C_{M})]\cdot A.
$$
\end{Corollary}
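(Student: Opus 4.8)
The plan is to obtain this as the immediate specialization of the preceding Theorem to the situation in which none of the $\lambda_i$ vanishes. First I would record a sanity check on the hypotheses: since $A\sigma^{\ell}(A^{T})=diag(\lambda_{1},\dots,\lambda_{M})$ with every $\lambda_i\neq 0$, the matrix $A\sigma^{\ell}(A^{T})$ is invertible, so $\sigma^{\ell}(A^{T})\,diag(\lambda_{1}^{-1},\dots,\lambda_{M}^{-1})$ is a right inverse of $A$; hence $A$ is right non-singular, and by Proposition~\ref{Prop-5-1} the matrix product code $C=[C_{1},\dots,C_{M}]\cdot A$ has dimension $\sum_{i=1}^{M}\dim(C_{i})$. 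This guarantees that the expression $[h_{\ell}(C_{1}),\dots,h_{\ell}(C_{M})]\cdot A$ on the right-hand side is a well-defined subcode of $C$ of dimension $\sum_{i=1}^{M}\dim(h_{\ell}(C_{i}))$.

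Next I would simply invoke the preceding Theorem. Its conclusion is $h_{\ell}(C)=[B_{1},\dots,B_{M}]\cdot A$, where $B_{i}=C_{i}$ if $\lambda_{i}=0$ and $B_{i}=h_{\ell}(C_{i})$ if $\lambda_{i}\neq 0$. Under the present hypothesis every $\lambda_{i}$ is nonzero, so the piecewise definition collapses to $B_{i}=h_{\ell}(C_{i})$ for all $1\le i\le M$, and substituting this into the Theorem yields exactly $h_{\ell}(C)=[h_{\ell}(C_{1}),\dots,h_{\ell}(C_{M})]\cdot A$, which is the assertion.

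There is essentially no obstacle here, since the corollary is just the reading-off of the case ``all $\lambda_{i}\neq 0$'' from a result already proved. If one wanted a self-contained argument instead, one could reuse the computation from the proof of the Theorem, namely $\langle\mathbf{u},\mathbf{v}\rangle_{\ell}=\sum_{i=1}^{M}\lambda_{i}\,{\bf u}_{i}\sigma^{\ell}({\bf v}_{i}^{T})$ for $\mathbf{u}=[{\bf u}_{1},\dots,{\bf u}_{M}]\cdot A\in C$ and $\mathbf{v}=[{\bf v}_{1},\dots,{\bf v}_{M}]\cdot A\in C$: because each $\lambda_{i}$ is invertible, this sum vanishes for every $\mathbf{u}\in C$ if and only if ${\bf u}_{i}\sigma^{\ell}({\bf v}_{i}^{T})=0$ for every $i$ and every ${\bf u}_{i}\in C_{i}$, i.e. if and only if ${\bf v}_{i}\in C_{i}\cap C_{i}^{\perp_{\ell}}=h_{\ell}(C_{i})$, while right non-singularity of $A$ makes the representation $\mathbf{v}=[{\bf v}_{1},\dots,{\bf v}_{M}]\cdot A$ unique. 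But given the Theorem, the one-line deduction above is all that is required.
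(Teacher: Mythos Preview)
Your proposal is correct and matches the paper's approach: the corollary is stated in the paper without proof, being the immediate specialization of the preceding Theorem to the case where every $\lambda_i$ is nonzero, exactly as you deduce. The sanity check on right non-singularity of $A$ and the optional self-contained argument are sound additions but not required.
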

Denote $C^{\bot_{0}}$ by $C^{\bot}$ and $h_{0}(C)$ by $h(C)$. Then

\begin{Corollary}
Let $C_{1},\cdots,C_{M}$ be linear codes of length $n$ over $\mathbb{F}_{q}$, and $C=[C_{1},\cdots, C_{M}]\cdot~A$.  If the matrix $A$ satisfies   $AA^{T}=diag(\lambda_{1},\cdots,\lambda_{M})$ and $\lambda_{i}\neq 0$ for all $1\leq i\leq M$,  then the hull $h(C)$ of $C$ is $h(C)=[h(C_{1}),\cdots, h(C_{M})]\cdot A$.

\end{Corollary}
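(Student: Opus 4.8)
The plan is to obtain this statement as the special case $\ell=0$ of Corollary~\ref{5-7}. First I would record the trivial but essential observation that $\sigma^{0}$ is the identity automorphism of $\mathbb{F}_{q}$, so that $\sigma^{0}(A^{T})=A^{T}$ and hence the matrix $A\sigma^{\ell}(A^{T})$ appearing in Corollary~\ref{5-7} is, for $\ell=0$, exactly $AA^{T}$. Thus the hypothesis ``$AA^{T}=diag(\lambda_{1},\cdots,\lambda_{M})$ with $\lambda_{i}\neq 0$ for all $i$'' is literally the hypothesis ``$A\sigma^{\ell}(A^{T})=diag(\lambda_{1},\cdots,\lambda_{M})$ with $\lambda_{i}\neq 0$ for all $i$'' of Corollary~\ref{5-7} in the case $\ell=0$. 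Along the way I would also note that $A$ is then right non-singular: with $D=diag(\lambda_{1}^{-1},\cdots,\lambda_{M}^{-1})$ one has $A(A^{T}D)=I_{M}$, so the matrix product code $C=[C_{1},\cdots,C_{M}]\cdot A$ has a generator matrix of the expected block form and the dimension formula applies, which is what makes Corollary~\ref{5-7} available.

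Next I would simply apply Corollary~\ref{5-7} with $\ell=0$ to conclude $h_{0}(C)=[h_{0}(C_{1}),\cdots,h_{0}(C_{M})]\cdot A$, and then translate this through the notational conventions introduced immediately before the statement, namely $C^{\bot_{0}}=C^{\bot}$ and $h_{0}(-)=h(-)$ (so $h_{0}(C_{i})=C_{i}\cap C_{i}^{\bot_{0}}=h(C_{i})$ for each $i$, and likewise for $C$). This yields $h(C)=[h(C_{1}),\cdots,h(C_{M})]\cdot A$, as claimed. I do not anticipate any genuine obstacle here: the entire content sits in the theorem preceding Corollary~\ref{5-7} and in Corollary~\ref{5-7} itself, and the only thing to check is the identification $\sigma^{0}=\mathrm{id}$ together with the matching of hypotheses; the proof is therefore a one-line specialization.
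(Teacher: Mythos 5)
Your proposal is correct and is essentially identical to the paper's proof, which simply takes $\ell=0$ in Corollary~\ref{5-7}; your extra observations ($\sigma^{0}=\mathrm{id}$, the right inverse $A^{T}\mathrm{diag}(\lambda_{1}^{-1},\cdots,\lambda_{M}^{-1})$, and the notational identification $h_{0}=h$) are accurate but just make explicit what the paper leaves implicit.
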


\begin{proof}
Take  $\ell=0$ in Corollary~\ref{5-7},  the result then follows.
\end{proof}

\begin{Example}
Take $q=3$ and $l=0$. Let  $C_{1}$ and $C_{2}$ be linear codes of length $4$ over $\mathbb F_{3}$ with generator matrices $G_{1}=\left(\begin{array}{cccc}  1  &0 &1&1\\  0   &1 &1&-1 \end{array}\right)$ and $G_{2}=\left(\begin{array}{cccc} 1    &1  &1  &1\end{array}\right) $  respectively. Then $h_{0}(C_{1})=0$ and $h_{0}(C_{2})=C_{2}$. Let $A=\left(\begin{array}{cc} 1    &1 \\ -1   &1
\end{array}\right)$ and $C=[C_{1},C_{2}]\cdot A$, it is easy to verify that  the generator matrix $G$ of $C$ is $G=\left(\begin{array}{cccccccc}
                       1    &0 &1&1&1 &0 &1 &1\\
                        0   &1 &1&-1 &0 &1 &1 &-1\\
                        -1  &-1 &-1 &-1&1&1  &1&1
\end{array}\right)$ and $GG^{T}=\left(\begin{array}{ccc}
                       0    &0  &0\\
                        0   &0  &0\\
                        0&  0&    -1
\end{array}\right)$.  Hence the generator matrix of $h_{0}(C)$ is $\left(\begin{array}{cccccccc}
                      1    &0 &1&1&1 &0 &1 &1\\
                        0   &1 &1&-1 &0 &1 &1 &-1\\
\end{array}\right) $  by Theorem~3.1. Since $h_{0}(C_{1})=0$, $h_{0}(C_{2})=C_{2}$,  and $AA^{T}=\left(\begin{array}{cc}
                       -1    &0 \\
                        0   &-1
\end{array}\right)$, we know that  the generator matrix of $h_{0}(C)=[h_{0}(C_{1}),h_{0}(C_{2})]\cdot A$ also is  $\left(\begin{array}{cccccccc}
                      1    &0 &1&1&1 &0 &1 &1\\
                        0   &1 &1&-1 &0 &1 &1 &-1\\
\end{array}\right) $ by Corollary~5.7.
\end{Example}

A matrix $A=(a_{ij})\in \mathbb{F}_{q}^{s\times s}$ is called {\it upper triangular} if $a_{ij}=0$ whenever $i>j$. A matrix $A\in\mathbb{F}_{q}^{s\times s}$ is called {\it block upper triangular} if as a block matrix $A$ is partitioned into the submatrices $A_{ij}\in \mathbb{F}_{q}^{s_{i}\times s_{j}}$, so that $A=(A_{ij})_{t\times t }$ , $\sum_{i=1}^{t}s_{i}=s$ and $A_{ij}=0$ for all $i>j$ , $1\leq i\leq s$ and $1\leq j\leq s$. Consider the $A_{ij}$ block as the entries of $A$, $A$ is upper triangular. Thus, $A=\left(\begin{array}{cccccc}
                       A_{11} &A_{12} &\cdots & A_{1t} \\
                        0  & A_{22}&\cdots&\vdots \\
                        \vdots &   &\ddots \\
                        0 & \cdots & 0 & A_{tt}
\end{array}\right),$ where each $A_{ij}$ is a $s_{i}\times s_{j}$ matrix and $\sum_{i=1}^{p}s_{i}=s$.

The following lemma is easy to prove.

\begin{lem}
Suppose  $A=\left(\begin{array}{cccccc}
                       A_{11} &A_{12} &\cdots & A_{1t} \\
                        0  & A_{22}&\cdots&\vdots \\
                        \vdots &   &\ddots \\
                        0 & \cdots & 0 & A_{tt}
\end{array}\right)\in\mathbb{F}_{q}^{s\times s}$ is a  block upper triangular, where each $A_{ij}$ is a $s_{i}\times s_{j}$ matrix  and $\sum_{i=1}^{t}s_{i}=s$. Then $r(A)\geq \sum_{i=1}^{t}r(A_{ii})$. The similar result of block lower triangular is also right.
\end{lem}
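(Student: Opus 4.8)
The plan is to prove the inequality $r(A)\ge \sum_{i=1}^{t}r(A_{ii})$ by exhibiting an explicit square submatrix of $A$ of size $\sum_{i=1}^{t}r(A_{ii})$ that is invertible; the block lower triangular case then follows by transposing, since $r(A)=r(A^{T})$ and $A^{T}$ is block upper triangular with diagonal blocks $A_{ii}^{T}$.

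First, for each $i$ with $1\le i\le t$, set $r_{i}=r(A_{ii})$. Since $A_{ii}$ is an $s_{i}\times s_{i}$ matrix of rank $r_{i}$, there exist a set $I_{i}$ of $r_{i}$ row indices and a set $J_{i}$ of $r_{i}$ column indices, both lying in the index range of the $i$-th diagonal block, such that the $r_{i}\times r_{i}$ submatrix $A_{ii}[I_{i},J_{i}]$ is invertible. Let $I=I_{1}\cup\cdots\cup I_{t}$ and $J=J_{1}\cup\cdots\cup J_{t}$, viewed as subsets of the row index set and the column index set of $A$ respectively; then $|I|=|J|=\sum_{i=1}^{t}r_{i}$.

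Next I would examine the submatrix $A[I,J]$ of $A$. Written in block form with respect to the partition induced by $I_{1},\ldots,I_{t}$ and $J_{1},\ldots,J_{t}$, its $(i,j)$-block is $A_{ij}[I_{i},J_{j}]$. For $i>j$ this block is zero, because $A_{ij}=0$ by hypothesis; hence $A[I,J]$ is again block upper triangular, and its $(i,i)$-block is $A_{ii}[I_{i},J_{i}]$, which is invertible by construction. Therefore $\det\bigl(A[I,J]\bigr)=\prod_{i=1}^{t}\det\bigl(A_{ii}[I_{i},J_{i}]\bigr)\ne 0$, so $A[I,J]$ is an invertible square matrix of size $\sum_{i=1}^{t}r_{i}$. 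Consequently $r(A)\ge r\bigl(A[I,J]\bigr)=\sum_{i=1}^{t}r_{i}=\sum_{i=1}^{t}r(A_{ii})$, and the lower triangular statement follows from the transpose argument above.

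There is no serious obstacle here; the only point requiring care is the index bookkeeping, namely verifying that after selecting the rows in $I$ and the columns in $J$ the blocks strictly below the diagonal remain zero (immediate from $A_{ij}=0$ for $i>j$) and that the surviving diagonal blocks are exactly the chosen invertible $A_{ii}[I_{i},J_{i}]$. An alternative would be an induction on $t$, splitting off the first block row and column and applying the identity $\dim(U+W)=\dim U+\dim W-\dim(U\cap W)$ to the column spaces, but the submatrix argument is more transparent and avoids any dimension chase.
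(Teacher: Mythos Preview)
Your argument is correct: selecting an invertible $r_i\times r_i$ minor inside each diagonal block $A_{ii}$ and assembling these row and column index sets yields a submatrix of $A$ that is itself block upper triangular with invertible diagonal blocks, hence invertible of size $\sum_{i=1}^{t}r_i$, which forces $r(A)\ge\sum_{i=1}^{t}r(A_{ii})$. The transpose remark handles the lower triangular case.

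There is nothing to compare against here: the paper states only that the lemma ``is easy to prove'' and omits any argument. Your submatrix construction is a clean and standard way to fill this gap; the alternative you mention (induction on $t$ via column spaces) would also work but is indeed less direct.
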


\begin{Theorem}
Let $C_{1},\cdots,C_{M}$ be linear codes of length $n$ over $\mathbb{F}_{q}$ and $C=[C_{1},\cdots, C_{M}]\cdot~A$, let $G_{i}$ be a  generator matrix of $C_{i}$ for all $1\leq i\leq M$, and $G$ be a generator matrix of $C$. Let $k_{i}$ be the  dimension of $C_{i}$ for all $1\leq i\leq M$, and $k$ be the dimension of $C$. Suppose $A\sigma^{\ell}(A^{T})=B=(b_{ij})_{M\times M}$ and $B$ is  upper triangular or  lower triangular.  Then

(1) If $b_{ii}\neq 0$ for all $1\leq i\leq M$ and  $A$ is right non-singular, then $$\sum_{i=1}^{M}r(G_{i}\sigma^{\ell}(G_{i})^{T})\leq r(G\sigma^{\ell}(G)^{T})\leq \sum_{i=1}^{M}k_{i}.$$

(2) If $r(G_{i}\sigma^{\ell}(G_{i})^{T})=k_{i}$, $A\sigma^{\ell}(A^{T})=\left(\begin{array}{cccc}
                       \lambda_{1}  &0  &\cdots & 0 \\
                        0  & \lambda_{2} &\cdots&\vdots \\
                        \vdots & \vdots &\ddots &0\\
                        0 & \cdots & 0 &\lambda_{M}
\end{array}\right)$ and $\lambda_{i}\neq0$ for all $1\leq i\leq M$. Then $\sum_{i=1}^{M}r(G_{i}\sigma^{\ell}(G_{i})^{T})= r(G\sigma^{\ell}(G)^{T})=\sum_{i=1}^{M}k_{i}.$
\end{Theorem}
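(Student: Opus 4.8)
The plan is to compute a generator matrix of $C$ explicitly and then exploit the block structure of $G\sigma^{\ell}(G)^T$. Recall from the earlier proposition that $G = \big(a_{ij}G_i\big)$, the block matrix whose $(i,j)$ block is $a_{ij}G_i$. Equivalently, writing $\widehat G = \mathrm{diag}(G_1,\dots,G_M)$ for the block-diagonal matrix of the $G_i$, we have $G = \widehat G\,(A\otimes I_n)$. Then, using Lemma~5.1 on tensor products and the hypothesis $A\sigma^{\ell}(A^T)=B$,
\[
G\sigma^{\ell}(G)^T = \widehat G\,(A\otimes I_n)\big(\sigma^{\ell}(A^T)\otimes I_n\big)\sigma^{\ell}(\widehat G)^T
= \widehat G\,\big(B\otimes I_n\big)\sigma^{\ell}(\widehat G)^T.
\]
Because $\widehat G$ is block diagonal and $B\otimes I_n$ has $(i,j)$ block $b_{ij}I_n$, the product on the right is the block matrix whose $(i,j)$ block is $b_{ij}\,G_i\sigma^{\ell}(G_j)^T$. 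This is the central computation; everything else is reading off consequences.

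For part (1), first note that $r(G\sigma^{\ell}(G)^T)\le k=\sum_i k_i$ is immediate since $G\sigma^{\ell}(G)^T$ is $k\times k$ (here I use that $A$ right non-singular forces $G$ to have exactly $\sum_i k_i$ rows, by Proposition~5.4 combined with the row-count argument in Proposition~5.6). For the lower bound, observe that if $B=(b_{ij})$ is upper (resp. lower) triangular, then the block matrix $\big(b_{ij}G_i\sigma^{\ell}(G_j)^T\big)$ is block upper (resp. block lower) triangular, with diagonal blocks $b_{ii}\,G_i\sigma^{\ell}(G_i)^T$. Since $b_{ii}\ne 0$, each diagonal block has rank $r(G_i\sigma^{\ell}(G_i)^T)$, and Lemma~5.9 gives
\[
r\big(G\sigma^{\ell}(G)^T\big)\ \ge\ \sum_{i=1}^{M} r\big(b_{ii}\,G_i\sigma^{\ell}(G_i)^T\big)\ =\ \sum_{i=1}^{M} r\big(G_i\sigma^{\ell}(G_i)^T\big).
\]
This establishes both inequalities.

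For part (2), the hypotheses are that $B=\mathrm{diag}(\lambda_1,\dots,\lambda_M)$ with all $\lambda_i\ne 0$ and that each $r(G_i\sigma^{\ell}(G_i)^T)=k_i$. The first hypothesis makes $G\sigma^{\ell}(G)^T$ itself block diagonal with blocks $\lambda_i\,G_i\sigma^{\ell}(G_i)^T$, so its rank is exactly $\sum_i r(G_i\sigma^{\ell}(G_i)^T)$; the second hypothesis then evaluates this sum to $\sum_i k_i$. Equivalently one can simply invoke part (1), whose chain of inequalities collapses to equalities once the two extremes agree. The main thing to be careful about is the bookkeeping in the tensor-product identity and confirming that the block-triangularity of $B$ is inherited by $\big(b_{ij}G_i\sigma^{\ell}(G_j)^T\big)$ with the correct diagonal blocks — this is the only place an off-by-one in the block indexing could bite — but once that is in hand, Lemma~5.9 finishes part (1) and part (2) is a direct specialization.
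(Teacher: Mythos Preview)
Your proof is correct and follows essentially the same approach as the paper: compute $G\sigma^{\ell}(G)^T$ as the block matrix with $(i,j)$ block $b_{ij}G_i\sigma^{\ell}(G_j)^T$, observe that this inherits the (block) triangular shape of $B$, and apply the rank lemma for block-triangular matrices to obtain the lower bound while the upper bound is just the size of the matrix. The only cosmetic difference is that you package the computation via the identity $G=\widehat G\,(A\otimes I_n)$ and Lemma~5.1, whereas the paper multiplies out the block matrices explicitly; part~(2) is handled identically in both as a specialization of part~(1).
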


\begin{proof}
(1) We just prove the case where $B$ is block upper triangular. Since $A=(a_{ij})$ is non-singular, we know that $G=\left(\begin{array}{ccc}
                       G_{1}a_{11}  &\cdots  &G_{1}a_{1N} \\
                        \vdots  & \ddots &\vdots \\
                        G_{M}a_{M1} & \cdots &G_{M}a_{MN}
\end{array}\right)$. Then we have

\begin{equation*}
\begin{split}
G\sigma^{\ell}(G)^{T}&=\left(\begin{array}{ccc}
                       G_{1}a_{11}  &\cdots  &G_{1}a_{1N} \\
                        \vdots  & \ddots &\vdots \\
                        G_{M}a_{M1} & \cdots &G_{M}a_{MN}
\end{array}\right)\left(\begin{array}{ccc}
                       \sigma^{\ell} (G_{1})^{T}\sigma^{\ell} (a_{11})  &\cdots  &\sigma^{\ell} (G_{M})^{T}\sigma^{\ell} (a_{M1}) \\
                        \vdots  & \ddots &\vdots \\
                        \sigma^{\ell} (G_{1})^{T}\sigma^{\ell} (a_{1N}) & \cdots &\sigma ^{\ell} (G_{M})^{T}\sigma^{\ell} (a_{MN})
\end{array}\right)\\
=&\left(\begin{array}{ccc}
                       b_{11}G_{1}\sigma^{\ell} (G_{1})^{T}  &\cdots  &b_{1M}G_{1}\sigma^{\ell} (G_{M})^{T} \\
                        \vdots  & \ddots &\vdots \\
                        b_{M1}G_{M}\sigma^{\ell} (G_{1})^{T} & \cdots &b_{MM}G_{M}\sigma ^{\ell} (G_{M})^{T}
\end{array}\right)=\left(\begin{array}{ccc}
                       b_{11}G_{1}\sigma^{\ell} (G_{1})^{T}  &\cdots  &b_{1M}G_{1}\sigma^{\ell} (G_{M})^{T} \\
                        \vdots  & \ddots &\vdots \\
                       0 & \cdots &b_{MM}G_{M}\sigma ^{\ell} (G_{M})^{T}
\end{array}\right).
\end{split}
\end{equation*}

By the above lemma, we know that $\sum_{i=1}^{M}r(G_{i}\sigma^{\ell}(G_{i})^{T})\leq r(G\sigma^{\ell}(G)^{T})$. Since $G\sigma^{\ell}(G)^{T}$ is $\sum_{i=1}^{M}k_{i}\times \sum_{i=1}^{M}k_{i}$, $r(G\sigma^{\ell}(G)^{T})\leq \sum_{i=1}^{M}k_{i}$. All in all,
$$
\sum_{i=1}^{M}r(G_{i}\sigma^{\ell}(G_{i})^{T})\leq r(G\sigma^{\ell}(G)^{T})\leq \sum_{i=1}^{M}k_{i}.
$$

(2) The proof of the second statement can be obtained from the proof of Statement~(1).
\end{proof}

\begin{Corollary}
Let $C_{1},\cdots,C_{M}$ be linear codes of length $n$ over $\mathbb{F}_{q}$ and $C=[C_{1},\cdots ,C_{M}]\cdot A$. Then

(1) If $A$ is right non-singular and $A\sigma^{\ell}(A^{T})$ is block upper triangular or block lower triangular, then $0\leq \dim(h_{\ell}(C))\leq \sum_{i=1}^{M}\dim(h_{\ell}(C_{i}))$.

(2) If  $A\sigma^{\ell}(A^{T})=diag(\lambda_{1},\cdots,\lambda_{M})$, where $\lambda_{i}\neq0$ for all $1\leq i\leq M$. Then $\dim(h_{\ell}(C))= \sum_{i=1}^{M}\dim(h_{\ell}(C_{i}))$.
\end{Corollary}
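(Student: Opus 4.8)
The plan is to reduce both parts to statements about the rank of the Gram-type matrix $G\sigma^{\ell}(G)^{T}$ via Theorem~\ref{l-Galois}, and then to invoke the results already proved for matrix product codes: the preceding Theorem for the inequality in part~(1), and Corollary~\ref{5-7} for the equality in part~(2).

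For part~(1), fix a generator matrix $G_{i}$ of $C_{i}$ for each $i$ and a generator matrix $G$ of $C$. Since $A$ is right non-singular, Proposition~\ref{Prop-5-1} gives $\dim(C)=\sum_{i=1}^{M}\dim(C_{i})$; write $k=\dim(C)$ and $k_{i}=\dim(C_{i})$. The ``furthermore'' part of Theorem~\ref{l-Galois}, applied to $C$ and to each $C_{i}$, yields
$$
\dim(h_{\ell}(C))=k-r\bigl(G\sigma^{\ell}(G)^{T}\bigr),\qquad \dim(h_{\ell}(C_{i}))=k_{i}-r\bigl(G_{i}\sigma^{\ell}(G_{i})^{T}\bigr).
$$
The hypothesis that $A\sigma^{\ell}(A^{T})$ is block upper (or block lower) triangular places us in the setting of part~(1) of the preceding Theorem, which yields
$$
\sum_{i=1}^{M}r\bigl(G_{i}\sigma^{\ell}(G_{i})^{T}\bigr)\leq r\bigl(G\sigma^{\ell}(G)^{T}\bigr)\leq \sum_{i=1}^{M}k_{i}.
$$
Subtracting these bounds from $k=\sum_{i=1}^{M}k_{i}$ and substituting the displayed dimension formulas gives $0\le\dim(h_{\ell}(C))\le\sum_{i=1}^{M}\dim(h_{\ell}(C_{i}))$.

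For part~(2), I would sidestep ranks and argue through the structure of the hull directly. Since $A\sigma^{\ell}(A^{T})=\mathrm{diag}(\lambda_{1},\dots,\lambda_{M})$ with every $\lambda_{i}\ne0$, Corollary~\ref{5-7} gives $h_{\ell}(C)=[h_{\ell}(C_{1}),\cdots,h_{\ell}(C_{M})]\cdot A$. Moreover $\sigma^{\ell}(A^{T})\,\mathrm{diag}(\lambda_{1},\dots,\lambda_{M})^{-1}$ is a right inverse of $A$, so $A$ is right non-singular, and Proposition~\ref{Prop-5-1} applied to the linear codes $h_{\ell}(C_{1}),\dots,h_{\ell}(C_{M})$ then gives $\dim(h_{\ell}(C))=\sum_{i=1}^{M}\dim(h_{\ell}(C_{i}))$. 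One could equally note that in the diagonal case $G\sigma^{\ell}(G)^{T}$ is itself block diagonal with diagonal blocks $\lambda_{i}G_{i}\sigma^{\ell}(G_{i})^{T}$, so the inequalities of part~(1) become equalities throughout.

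The step needing the most care is part~(1): one must be certain that the full hypotheses of the preceding Theorem are in force, and in particular that the diagonal blocks of $A\sigma^{\ell}(A^{T})$ are non-singular. If a diagonal block were singular, then $C$ could acquire $\ell$-Galois self-orthogonal directions not present in any individual $C_{i}$ — for instance when $M=1$ and $A\sigma^{\ell}(A^{T})=0$, where $C$ is $\ell$-Galois self-orthogonal while the $C_{i}$ need not be — and the upper bound would break. So this non-degeneracy condition, read together with the block-triangular hypothesis and right non-singularity of $A$, is the real content of part~(1); the rest is routine bookkeeping with Theorem~\ref{l-Galois} and Proposition~\ref{Prop-5-1}.
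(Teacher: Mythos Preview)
Your argument for part~(1) is exactly the paper's: convert hull dimensions to ranks via Theorem~\ref{l-Galois}, invoke the rank inequality from the preceding Theorem, and use Proposition~\ref{Prop-5-1} for $k=\sum_{i}k_{i}$. For part~(2) the paper simply writes that it ``can be obtained from the proof of Statement~(1)'', i.e.\ via the block-diagonal rank computation you mention as an alternative at the end; your primary route through Corollary~\ref{5-7} together with Proposition~\ref{Prop-5-1} is a clean and equally valid shortcut that avoids ranks altogether and, as a bonus, makes the right non-singularity of $A$ explicit.

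Your closing caveat is well taken and in fact sharper than the paper itself. The Corollary as stated only assumes that $A\sigma^{\ell}(A^{T})$ is (block) upper or lower triangular and that $A$ is right non-singular, but part~(1) of the preceding Theorem additionally requires the diagonal entries $b_{ii}$ of $B=A\sigma^{\ell}(A^{T})$ to be nonzero. The paper's proof of the Corollary silently invokes that Theorem and hence tacitly assumes this extra condition. Your counterexample with $M=1$ and $A\sigma^{\ell}(A^{T})=0$ (take $A\ne 0$ so that $A$ is right non-singular, and $C_{1}$ any $\ell$-Galois LCD code) shows that the bound in part~(1) genuinely fails without it: then $C$ is $\ell$-Galois self-orthogonal while $\dim(h_{\ell}(C_{1}))=0$. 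So you have correctly located both the argument and its hidden hypothesis.
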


\begin{proof}
(1) Let $G_{i}$ be a generator matrix of $C_{i}$ for all $1\leq i\leq M$, $G$ be a  generator matrix of $C$.  Let $k_{i}$ be the dimension of $C_{i}$ for all $1\leq i\leq M$, and $k$ be the dimension of $C$. By the above theorem, we know that
$$
\sum_{i=1}^{M}r(G_{i}\sigma^{\ell}(G_{i})^{T})\leq r(G\sigma^{\ell}(G))\leq \sum_{i=1}^{M}k_{i}.
$$
By Theorem~2.1, we know that $k=r(G\sigma^{\ell}(G))+\dim(h_{\ell}(C))$ and $k_{i}=r(G_{i}\sigma^{\ell}(G_{i}))+\dim(h_{\ell}(C_{i}))$ for all $1\leq i\leq M$ . By Proposition 5.4, we know that $k=\sum_{i=1}^{M}k_{i}$. In summary, we  have $0\leq \dim(h_{\ell}(C))\leq \sum_{i=1}^{M}\dim(h_{\ell}(C_{i}))$.

(2) The poof of the second statement can be obtained from the proof of Statement~(1).
\end{proof}

\noindent {\bf Acknowledgement.} The authors sincerely thank Professor Jay A. Wood for his valuable comments during his visit at Central China Normal University from April to May in 2018. This work was supported by NSFC (Grant No. 11871025) and the self-determined research funds of CCNU from the colleges' basic research and operation of MOE (Grant No. CCNU18TS028).


\end{document}